\DeclareSymbolFont{bbold}{U}{bbold}{m}{n}
\DeclareSymbolFontAlphabet{\mathbbold}{bbold}
\def\zetab{\bm{\zeta}}
\def\etab{\bm{\eta}}
\def\thetab{\bm{\theta}}
\def\lambdab{\bm{\lambda}}
\def\mub{\bm{\mu}}
\def\phib{\bm{\phi}}
\def\varphib{\bm{\varphi}}
\def\varPib{\bm{\varPi}}
\def\ab{\bm{a}}
\def\cb{\bm{c}}
\def\fb{\bm{f}}
\def\gb{\bm{g}}
\def\hb{\bm{h}}
\def\tb{\bm{t}}
\def\xb{\bm{x}}
\def\zb{\bm{z}}
\def\Ab{\bm{A}}
\def\Cb{\bm{C}}
\def\Hb{\bm{H}}
\def\Ib{\bm{I}}
\def\Qb{\bm{Q}}
\def\Rb{\bm{R}}
\def\Rbb{\mathbb{R}}
\def\Cbb{\mathbb{C}}
\def\ddiff{\mathrm{d}}
\newcommand*{\diff}{\mathop{}\mathopen{}\ddiff} 
\newcommand*{\eqdef}{\triangleq}
\newcommand*{\semcol}{\mathinner{;}}
\newcommand*{\trsps}{^{\mathrm{T}}}
\newcommand{\given}[1][]{\nonscript\mspace{2mu}#1\vert
    \allowbreak
    \nonscript\mspace{2mu}
    \mathopen{}}
\newcommand{\ttp}[1][]{\mathrel{\overset{#1 P}{\longrightarrow}}}
\newcommand{\ttms}[1][]{\mathrel{\overset{#1 L^{2}}{\longrightarrow}}}
\newcommand{\equivp}[1][]{\overset{#1 P}{\sim}}
\newcommand{\equivms}[1][]{\overset{#1 L^{\mathrlap{2}}}{\sim}}
\newcommand{\attp}{\mathchoice{\underset{N\rightarrow\infty}{\ttp}}{\ttp[\scriptscriptstyle]_{N\rightarrow\infty}}{\ttp_{N\rightarrow\infty}}{\ttp_{N\rightarrow\infty}}}
\newcommand{\aequivp}{\mathchoice{\underset{N\rightarrow\infty}{\equivp}}{\equivp[\scriptscriptstyle]_{N\rightarrow\infty}}{\equivp_{N\rightarrow\infty}}{\equivp_{N\rightarrow\infty}}}
\def\ExpctE{\mathrm{E}}
\newcommand{\Expct}{\@ifstar\Expct@star\Expct@nostar}
\newcommand{\Expct@star}[2]{%
    \mathop{\ExpctE_{#1}}\mathopen{}\left[#2\right]}
\newcommand{\Expct@nostar}[3][]{%
    \mathop{\ExpctE_{#2}}\mathopen{#1[}#3\mathclose{#1]}}
\def\VarV{\mathrm{V}}
\newcommand{\Var}{\@ifstar\Var@star\Var@nostar}
\newcommand{\Var@star}[2]{%
    \mathop{\VarV_{#1}}\mathopen{}\left[#2\right]}
\newcommand{\Var@nostar}[3][]{%
    \mathop{\VarV_{#2}}\mathopen{#1[}#3\mathclose{#1]}}
\newcommand{\indicf}{\@ifstar\indicf@star\indicf@nostar}
\newcommand{\indicf@star}[2]{%
    \mathbbold{1}_{#1}\mathopen{}\left(#2\right)}
\newcommand{\indicf@nostar}[3][]{%
    \mathbbold{1}_{#2}\mathopen{#1(}#3\mathclose{#1)}}
\newcommand{\interval}[5][]{\mathopen{#1#2}#3
    \mathclose{}\mathpunct{},#4
    \mathclose{#1#5}}
\newcommand{\intervalcc}[3][]{\interval[#1]{[}{#2}{#3}{]}}
\newcommand{\intervaloo}[3][]{\interval[#1]{]}{#2}{#3}{[}}
\newcommand{\intervalint}[3][]{\mathopen{#1\{}#2
    ,\ldots,#3
    \mathclose{#1\}}}
\newlength{\eqboxstorage}
\DeclarePairedDelimiter{\abs}{\lvert}{\rvert}
\DeclarePairedDelimiter{\norm}{\lVert}{\rVert}
\DeclarePairedDelimiterX{\scalprod}[2]{\langle}{\rangle}{%
    #1,#2}
\DeclarePairedDelimiterXPP{\Prob}[1]{\Pr}{(}{)}{}{%
    \renewcommand{\given}{\nonscript\,\delimsize\vert
        \allowbreak
        \nonscript\,
        \mathopen{}
    }
    #1}
\DeclarePairedDelimiterX{\enstq}[1]\{\}{%
    \renewcommand{\given}{\nonscript\:\delimsize\vert%
        \allowbreak%
        \nonscript\:%
        \mathopen{}%
    }
    \renewcommand{\semcol}{\nonscript\:%
        ;%
        \allowbreak%
        \nonscript\:%
        \mathopen{}%
    }
\DeclarePairedDelimiterXPP{\Realpart}[1]{\mathrm{Re}}{[}{]}{}{#1}
\DeclarePairedDelimiterXPP{\Trace}[1]{\mathrm{tr}}{[}{]}{}{#1}
\DeclareMathOperator*{\argmax}{arg\,max}
\theoremstyle{plain}
\newtheorem{theorem}{Theorem}
\theoremstyle{definition}
\newtheorem{definition}{Definition}
\theoremstyle{plain}
\newtheorem{proposition}{Proposition}
\begin{document}
    
\title{Some Results on Tighter Bayesian Lower Bounds on the Mean-Square Error}
\author{
        Lucien~Bacharach\thanks{%
                L. Bacharach and É. Chaumette are with ISAE-Supaéro/DEOS, 10,
avenue Édouard Belin, 31055 Toulouse Cedex 4 Toulouse, France.
            },
        Carsten~Fritsche\thanks{%
                C.~Fritsche is with Linköping University, Dept. of Electrical
Engineering, Linköping, Sweden.
            },
        Umut~Orguner\thanks{%
                U.~Orguner is with Middle East Technical University, Dept. of
Electrical Engineering, Ankara, Turkey.
            }, 
        and Éric~Chaumette\thanks{%
                This work was supported in part by the DGA/AID (2018.60.0072.00.470.75.01) and by the Excellence Center at Link\"{o}ping and Lund in Information Technology (ELLIIT).
            }
        }
\date{July 22, 2019}

\maketitle
       
\begin{abstract}
In random parameter estimation, Bayesian lower bounds (BLBs) for the mean-square error have been noticed to not be tight in a number of cases, even when the sample size, or the signal-to-noise ratio, grow to infinity. In this paper, we study alternative forms of BLBs obtained from a covariance inequality, where the inner product is based on the \textit{a posteriori} instead of the joint probability density function. We hence obtain a family of BLBs, which is shown to form a counterpart at least as tight as the well-known Weiss-Weinstein family of BLBs, and we extend it to the general case of vector parameter estimation. Conditions for equality between these two families are provided. Focusing on the Bayesian Cramér-Rao bound (BCRB), a definition of efficiency is proposed relatively to its tighter form, and efficient estimators are described for various types of common estimation problems, e.g., scalar, exponential family model parameter estimation. Finally, an example is provided, for which the classical BCRB is known to not be tight, while we show its tighter form is, based on formal proofs of asymptotic efficiency of Bayesian estimators. This analysis is finally corroborated by numerical results.
\end{abstract}

\begin{IEEEkeywords}
Parameter estimation, tighter information inequalities, Cramér-Rao bound, maximum a posteriori estimation, efficiency
\end{IEEEkeywords}

\section{Introduction} \label{sec:intro}

\subsection{Background on Bayesian lower bounds on the
MSE}\label{ssec:background_bounds}

Parameter estimation from noisy observations is a fundamental problem arising in many fields such as signal processing, system identification, control theory, communications or economics. As a consequence, many parameter estimation methods and algorithms have been proposed in the literature (see, e.g., \cite{Kay93,VT02}). In order to choose the most suitable method, it is crucial to assess their performance and determine the best achievable accuracy. In the Bayesian framework, the unknown parameter is assumed to be random, with a known \textit{a priori} distribution. A commonly adopted risk function (or estimation performance criterion) is the mean-square error (MSE), for which the Bayes (i.e., optimal) estimator is the posterior mean, i.e., the mean of the \textit{a posteriori} p.d.f. \cite{LC03,Rob07}. This estimator achieves the best accuracy in terms of MSE, and, as a consequence, is commonly referred to as the minimum mean-square error (MMSE) estimator. However, it requires the knowledge of the \textit{a posteriori} p.d.f., which is, except for a few special cases (e.g., conjugate priors),  often difficult to obtain as it involves the evaluation of high-dimensional integrals. They can be computed using Monte-Carlo methods, for instance, and then lead to a high computational cost. Even though one is able to compute them, it is still necessary to assess the estimator's MSE, which increases the computational cost even more. Therefore, it is often necessary to resort to suboptimal approaches, whose loss in accuracy has to be assessed.

Usually, the aforementioned difficulties are overcome by resorting to lower bounds on the MSE that derive from mathematical inequalities. Ideally, these lower bounds are sought to be both computationally tractable and tight, i.e., they should provide insight about the MMSE, as accurately as possible. These lower bounds then make up references with which one can compare the performance of any estimator. For the estimation of random parameters, several Bayesian lower bounds (BLBs) have already been proposed and derived (see, e.g., \cite{VTB07} for an overview). The perhaps most widely used BLB is the Bayesian Cramér-Rao bound (BCRB), as it was the first to be derived \cite{Sch57,Gar59,VT68}, and is also very simple to calculate. Nonetheless, the BCRB turns out to be somewhat optimistic, especially for nonlinear estimation problems, where one often notices performance breakdowns in terms of MSE below a specific signal-to-noise ratio (SNR) or sample size: this phenomenon is referred to as the ``threshold effect'', and is not rendered by the BCRB \cite{RB74,VTB07,RFLRN08}. Yet, determining the appearance conditions of the threshold effect is essential so as to specify the estimators' nominal operating conditions. This has led to considerable work on BLBs, giving rise to two main families: i) the Ziv-Zakaï family, relating to the error probability in binary hypothesis tests \cite{ZZ69,BT74,BSEV97}, and ii) the Weiss-Weinstein family (WWF), deriving from covariance inequalities \cite{Sch57,Gar59,VT68,BZ76,WW88,TT10b}. Some of the bounds among each family make it possible to predict the threshold effect \cite{VTB07} (and references therein), \cite{VRBM14}. Despite this feature, there are many problems for which BLBs are not tight, even in the asymptotic regime, whether for standard problems \cite[pp.~11, 37, 38]{VTB07}, or for dynamic nonlinear filtering \cite{RN05,BV06,XGMM13}.

Recently, a class of BLBs has been derived and shown to be at least as tight as those of the WWF \cite{CF18}. More precisely, it was shown that any lower bound in the WWF implies an alternative form, which is tighter than the standard one. Similarly as the standard form, the tighter form of BLBs is based on a covariance inequality principle, but where the inner product is defined w.r.t. the posterior p.d.f. (instead of the joint p.d.f. for standard BLBs of the WWF). However, this study was limited to the case of scalar parameter estimation. The precision gain of the tighter BLBs had not been assessed until our recent conference paper \cite{BFOC19}, which focused on the case of the BCRB and showed promising results: the tighter BCRB (TBCRB) was shown to be asymptotically tight by simulation, but this was not formally proved.

The main contributions of the present paper with respect to previous work \cite{CF18,BFOC19} are threefold. First, we introduce a more general proof of the covariance inequality leading to the tighter form of BLBs, that encompass the case of vector parameter estimation. Second, general conditions for efficiency are studied and provided for various cases of practical interest: estimation of scalar parameter and exponential family parameter. Third, for the example studied in Section \ref{sec:example}, several asymptotic results are provided and formally proved, about the behavior of Bayesian estimators (like the maximum \textit{a posteriori} (MAP) and the MMSE), in particular their efficiency.

The sequel of this paper is organized as follows. Basic notations, definitions and assumptions used throughout the paper are presented in Section \ref{ssec:notations}. We recall basic results on classical BLBs of the WWF in Section \ref{sec:classical_BLBs}. Their tighter counterparts are presented in Section \ref{sec:TBLBs}, and shown to be indeed tighter. Then, in Section \ref{sec:TBCRB} we focus on the case of the BCRB and provide conditions for efficiency (i.e., attainment of the TBCRB) for various problems. In Section \ref{sec:example}, we study a specific estimation problem, and provide formal proofs of asymptotic efficiency of Bayesian estimators, before illustrating them with numerical results. Finally, concluding remarks are reported in Section \ref{sec:conclusion}.

\subsection{Summary of basic notations, definitions and assumptions} 
\label{ssec:notations}

Throughout the present paper, scalar quantities are denoted by italic letters
(e.g., $a$, $A$), vectors by bold lowercase letters (e.g., $\ab$), and matrices
by bold uppercase letters (e.g., $\Ab$). For some given vector $\ab$, its $i$-th
element is denoted by $a_{i}$, and for some given matrix $\Ab$, its $i$-th row
and $j$-th column element is denoted by $A_{i,j}$. The transpose of a vector
$\ab$ (resp. a matrix $\Ab$) is denoted by $\ab\trsps$ (resp. $\Ab\trsps$). For
some given Hilbert space $\mathcal{H}$ with an inner product
$\scalprod{\cdot\,}{\cdot}$, the orthogonal complement of a subspace
$\mathcal{S}\subset\mathcal{H}$ is denoted by $\mathcal{S}^{\perp}$. The
indicator function of a set $\mathcal{A}$ is denoted by
$\indicf{\mathcal{A}}{\cdot}$.

More specifically, let $\mathcal{X}\subseteq\Rbb^{N}$ be the observation space,
whose elements are random observation vectors denoted by
$\xb\eqdef(x_{1},\ldots,x_{N})\trsps$, and $\Theta\subseteq\Rbb^{K}$ be the
parameter space whose elements are denoted by
$\thetab\eqdef(\theta_{1},\ldots,\theta_{K})\trsps$. 
Let $p(\xb,\thetab)$ denote the joint p.d.f. of $\xb$ and $\thetab$, and
$\mathcal{S}_{\mathcal{X},\Theta}$ its support, i.e.,
$\mathcal{S}_{\mathcal{X},\Theta}\eqdef\enstq[]{(\xb\trsps,\thetab\trsps)\trsps\in\mathcal{X}\times\Theta\given
p(\xb,\thetab)>0}\subseteq\Rbb^{N}\times\Rbb^{K}$. Likewise, let us denote by
$\mathcal{S}_{\Theta}$ the support of the prior p.d.f. $p(\thetab)$ on the one
hand (i.e., $\mathcal{S}_{\Theta}\eqdef\enstq[]{\thetab\in\Theta\given p(\thetab)>0}$), and by $\mathcal{S}_{\mathcal{X}}$ the support of the marginal p.d.f. $p(\xb)$ (i.e., $\mathcal{S}_{\mathcal{X}}\eqdef\enstq[]{\xb\in\mathcal{X}\given p(\xb)>0}$). In addition, let us define $\mathcal{S}_{\Theta\given\xb}$ and $\mathcal{S}_{\mathcal{X}\given\thetab}$ as the supports of the joint p.d.f. $p(\xb,\thetab)$ w.r.t. $\thetab$ and $\xb$ respectively, i.e.,
\begin{itemize}
    \item[i)] for some given $\xb\in\mathcal{S}_{\mathcal{X}}$,
$\mathcal{S}_{\Theta\given\xb}\eqdef\enstq[]{\thetab\in\Theta\given
p(\xb,\thetab)>0}$;
    
    \item[ii)] for some given $\thetab\in\mathcal{S}_{\Theta}$,
$\mathcal{S}_{\mathcal{X}\given\thetab}\eqdef\enstq[]{\xb\in\mathcal{X}\given
p(\xb,\thetab)>0}$.
\end{itemize}
Consequently, one can write
\begin{equation}
    p(\thetab)=\int_{\mathcal{S}_{\mathcal{X}\given\thetab}}
    p(\xb,\thetab)\diff\xb,
    \quad\text{and}\quad
    p(\xb)=\int_{\mathcal{S}_{\Theta\given\xb}}
    p(\xb,\thetab)\diff\thetab.
\end{equation}
From these definitions, the various expectations of a deterministic and
measurable function $\hb:\mathcal{X}\times\Theta\rightarrow\Rbb^{J}$ can be
written as
\begin{IEEEeqnarray}{rCl}
    \IEEEyesnumber\IEEEyessubnumber*
    \Expct{\xb,\thetab}{\hb(\xb,\thetab)}&=&
    \int_{\mathcal{S}_{\mathcal{X},\Theta}}
    \hb(\xb,\thetab)\,p(\xb,\thetab)\diff\xb\diff\thetab, \\
    \Expct{\xb\given\thetab}{\hb(\xb,\thetab)}&=&
    \int_{\mathcal{S}_{\mathcal{X}\given\thetab}}
    \hb(\xb,\thetab)\,p(\xb\given\thetab)\diff\xb, \\
    \Expct{\thetab\given\xb}{\hb(\xb,\thetab)}&=&
    \int_{\mathcal{S}_{\Theta\given\xb}}
    \hb(\xb,\thetab)\,p(\thetab\given\xb)\diff\thetab, \\
    \Expct{\xb}{\hb(\xb,\thetab)}&=&
    \int_{\mathcal{S}_{\mathcal{X}}}
    \hb(\xb,\thetab)\,p(\xb)\diff\xb, \\
    \Expct{\thetab}{\hb(\xb,\thetab)}&=&
    \int_{\mathcal{S}_{\Theta}}
    \hb(\xb,\thetab)\,p(\thetab)\diff\thetab.
\end{IEEEeqnarray}
Accordingly, the variances of $\hb(\xb,\thetab)$ w.r.t. the different distributions appearing above, are respectively denoted by $\Var{\xb,\thetab}{\hb(\xb,\thetab)}=\Expct[\big]{\xb,\thetab}{\bigl(\hb(\xb,\thetab)-\Expct{\xb,\thetab}{\hb(\xb,\thetab)}\bigr)^{2}}$, $\Var{\xb|\thetab}{\hb(\xb,\thetab)}$, $\Var{\thetab|\xb}{\hb(\xb,\thetab)}$, $\Var{\xb}{\hb(\xb,\thetab)}$ and $\Var{\thetab}{\hb(\xb,\thetab)}$.

In addition, throughout the present paper:

\noindent$\bullet$ $\gb:\Theta\rightarrow\Rbb^{L}$ denotes some deterministic,
known function, of which we seek to estimate
$\gb(\thetab)\eqdef(g_{1}(\thetab),\ldots,g_{L}(\thetab))\trsps$. We assume
that, $\forall \ell\in\intervalint{1}{L}$,
$\forall\xb\in\mathcal{S}_{\mathcal{X}}$, 
$g_{\ell}(\cdot)\in\mathcal{L}_{2}(\mathcal{S}_{\Theta\given\xb})$, where
$\mathcal{L}_{2}(\mathcal{S}_{\Theta\given\xb})$ denotes the space of functions
with finite second moments w.r.t. $p(\thetab\given\xb)$, i.e.,
$\Expct{\thetab\given\xb}{g_{\ell}^{2}(\thetab)}<+\infty$.

\noindent$\bullet$ $\widehat{\gb}:\mathcal{X}\rightarrow\Rbb^{L}$ denotes some
estimator of $\gb(\thetab)$. We assume that, $\forall
\ell\in\intervalint{1}{L}$,
$\widehat{g}_{\ell}(\cdot)\in\mathcal{L}_{2}(\mathcal{S}_{\mathcal{X}})$, where
$\mathcal{L}_{2}(\mathcal{S}_{\mathcal{X}})$ denotes the space of functions with
finite second moments w.r.t. $p(\xb)$, i.e.,
$\Expct{\xb}{\widehat{g}_{\ell}^{2}(\xb)}<+\infty$.

\noindent$\bullet$ $\varphi:\mathcal{X}\times\Theta\rightarrow\Rbb$ denotes some
deterministic, known function. We assume that
$\varphi(\cdot)\in\mathcal{L}_{2}(\mathcal{S}_{\mathcal{X},\Theta})$, where
$\mathcal{L}_{2}(\mathcal{S}_{\mathcal{X},\Theta})$ denotes the space of
functions with finite second moments w.r.t. $p(\xb,\thetab)$, i.e.,
$\Expct{\xb,\thetab}{\varphi^{2}(\xb,\thetab)}<+\infty$.

\noindent$\bullet$ The inner product of two functions
$\zeta(\cdot),\xi(\cdot)\in\mathcal{L}_{2}(\mathcal{S}_{\mathcal{X},\Theta})$ is
defined by
\begin{equation}
    \scalprod{\zeta(\xb,\thetab)}{\xi(\xb,\thetab)}\eqdef
    \Expct{\xb,\thetab}{\zeta(\xb,\thetab)\,\xi(\xb,\thetab)},
    \label{eqn:innerprod1}
\end{equation}
and the natural norm based on it is denoted by $\norm{.}$.

\noindent$\bullet$ For some family of functions (or vector function)
$\varphib(\cdot)\eqdef(\varphi_{1}(\cdot),\ldots,\varphi_{M}(\cdot))\trsps\in\mathcal{L}_{2}^{M}(\mathcal{S}_{\mathcal{X},\Theta})$
with finite second moments w.r.t. $p(\xb,\thetab)$, we denote by
$\mathcal{S}_{\varphib}$ its linear span, 
that is 
\begin{equation}
    \mathcal{S}_{\varphib}\eqdef\mathrm{span}(\varphib(\cdot))=\enstq[\big]{
        \lambdab\trsps\varphib(\cdot)\given \lambdab\in\Rbb^{M}
    }.
\end{equation}
The orthogonal complement of $\mathcal{S}_{\varphib}$ in
$\mathcal{L}_{2}(\mathcal{S}_{\mathcal{X},\Theta})$ for the inner product
\eqref{eqn:innerprod1} is denoted by $\mathcal{S}_{\varphib}^{\perp}$, i.e.,
\begin{IEEEeqnarray}{/l}
    \mathcal{S}_{\varphib}^{\perp}\eqdef
    \enstq[\Big]{
        \zeta(\cdot)\in\mathcal{L}_{2}(\mathcal{S}_{\mathcal{X},\Theta}) \given
        \Expct{\xb,\thetab}{\zeta(\xb,\thetab)\,\varphib(\xb,\thetab)}=\bm{0}}.
\end{IEEEeqnarray}

\noindent$\bullet$ For some subspace $\mathcal{A}$ of
$\mathcal{L}_{2}(\mathcal{S}_{\mathcal{X},\Theta})$, let $\varPi_{\mathcal{A}}:
\mathcal{L}_{2}(\mathcal{S}_{\mathcal{X},\Theta}) \rightarrow
\mathcal{L}_{2}(\mathcal{S}_{\mathcal{X},\Theta})$ denote the orthogonal
projector onto $\mathcal{A}$ based on the inner product \eqref{eqn:innerprod1}.
Thus, any function
$f(\cdot)\in\mathcal{L}_{2}(\mathcal{S}_{\mathcal{X},\Theta})$ can be decomposed
as
\begin{equation}
    f(\xb,\thetab) = \varPi_{\mathcal{A}}(f)(\xb,\thetab) +
\varPi_{\mathcal{A}^{\perp}}(f)(\xb,\thetab).
    \label{eqn:proj_decomp}
\end{equation}
Accordingly, we denote by $\varPib_{\mathcal{A}}(\fb)$ and
$\varPib_{\mathcal{A}^{\perp}}(\fb)$ the element-wise orthogonal projections
of a vector function
$\fb(\cdot)\eqdef(f_{1}(\cdot),\ldots,f_{J}(\cdot))\trsps\in\mathcal{L}_{2}^{J}(\mathcal{S}_{\mathcal{X},\Theta})$
onto $\mathcal{A}$ and $\mathcal{A}^{\perp}$ respectively, i.e.,
$\varPib_{\mathcal{A}}(\fb)\eqdef
(\varPi_{\mathcal{A}}(f_{1}),\ldots,\varPi_{\mathcal{A}}(f_{J}))\trsps$.

\medskip
Using these notations and definitions, we now derive two classes of Bayesian
information inequalities, the first of which leads to the so-called
Weiss-Weinstein family (WWF) of Bayesian lower bounds (BLBs) 
\cite{WW88,TT10b,CRE17}, while the second, introduced more recently \cite{CF18},
leads to their tighter counterparts (tighter Bayesian lower bounds, TBLBs).

\section{Classical Bayesian lower bounds} \label{sec:classical_BLBs}

In this section, we recall the basic form of the covariance inequality, which
leads to a general class of Bayesian lower bounds on the global mean-square
error of any Bayes estimator. In particular, it includes the fairly well-known
Weiss-Weinstein family of BLBs \cite{WW88,TT10b,CRE17}.

\subsection{Background on covariance inequality}
\label{ssec:background_cov_ineq}


A basic form of the covariance inequality can be stated as follows (a proof is
given to enable an easier comparison between the lower bounds from the present
section and those from Section \ref{sec:TBLBs}).

\begin{theorem}[Covariance inequality] \label{thm:cov_ineq}
    Let
$\gb(\cdot)\eqdef(g_{1}(\cdot),\ldots,g_{L}(\cdot))\trsps\in\mathcal{L}_{2}^{L}(\mathcal{S}_{\mathcal{X},\Theta})$
be some vector function with finite second moments w.r.t. $p(\xb,\thetab)$, and
$\varphib(\cdot)\eqdef (\varphi_{1}(\cdot),\ldots,\varphi_{M}(\cdot))\trsps
\in\mathcal{L}_{2}^{M}(\mathcal{S}_{\mathcal{X},\Theta})$ be some family of
linearly independent functions with finite second moments w.r.t.
$p(\xb,\thetab)$ as well. Then, for any vector function
$\zetab(\cdot)\eqdef(\zeta_{1}(\cdot),\ldots,\zeta_{L}(\cdot))\trsps\in(\mathcal{S}_{\varphib}^{\perp})^{L}$,
    \begin{equation}
        \Qb_{(\gb-\zetab)}\succeq
        \Rb_{\gb\varphib}\:\Qb_{\varphib}^{-1}\:\Rb_{\gb\varphib}\trsps,
        \label{eqn:cov_ineq}
    \end{equation}
    where $\Qb_{(\gb-\zetab)}$ is the $L\times L$ matrix defined by
    \begin{equation}
        \Qb_{(\gb-\zetab)}\eqdef
        \Expct[\big]{\xb,\thetab}{(\gb(\xb,\thetab)-\zetab(\xb,\thetab))\cdot(\gb(\xb,\thetab)-\zetab(\xb,\thetab))\trsps},
    \end{equation}
    $\Rb_{\gb\varphib}$ is the $L\times M$ matrix defined by
    \begin{equation}
        \Rb_{\gb\varphib}\eqdef
        \Expct[\big]{\xb,\thetab}{\gb(\xb,\thetab)\,\varphib\trsps(\xb,\thetab)},
        \label{eqn:def_Rgphi}
    \end{equation}
    $\Qb_{\varphib}$ is the $M\times M$ matrix defined by
    \begin{equation}
        \Qb_{\varphib}\eqdef
        \Expct[\big]{\xb,\thetab}{\varphib(\xb,\thetab)\,\varphib\trsps(\xb,\thetab)},
        \label{eqn:def_Qphi}
    \end{equation}
    and the inequality sign ``$\succeq$'' in \eqref{eqn:cov_ineq} means that the
difference between the left and the right side is a positive semi-definite
matrix.
\end{theorem}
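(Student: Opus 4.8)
The plan is to recognize the stated inequality as a matrix-valued covariance (Cauchy–Schwarz) inequality and to derive it by an orthogonal projection argument in the Hilbert space $\mathcal{L}_{2}(\mathcal{S}_{\mathcal{X},\Theta})$ equipped with the inner product \eqref{eqn:innerprod1}.

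First I would exploit the hypothesis $\zetab\in(\mathcal{S}_{\varphib}^{\perp})^{L}$. By definition of $\mathcal{S}_{\varphib}^{\perp}$, each component satisfies $\Expct{\xb,\thetab}{\zeta_{\ell}\,\varphib\trsps}=\bm{0}$, so that $\Expct{\xb,\thetab}{\zetab\,\varphib\trsps}=\bm{0}$. Consequently the cross-moment of $\gb-\zetab$ with $\varphib$ coincides with that of $\gb$ alone, i.e. $\Expct{\xb,\thetab}{(\gb-\zetab)\,\varphib\trsps}=\Rb_{\gb\varphib}$. This is the step that removes the residual $\zetab$ from the right-hand side of \eqref{eqn:cov_ineq}, and it is the conceptual heart of the statement; everything that follows is the classical covariance inequality applied to the pair $(\gb-\zetab,\varphib)$.

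Next I would establish the scalar inequality for an arbitrary fixed $\ab\in\Rbb^{L}$ and then deduce the matrix inequality. Set $h\eqdef\ab\trsps(\gb-\zetab)\in\mathcal{L}_{2}(\mathcal{S}_{\mathcal{X},\Theta})$ and project $h$ orthogonally onto $\mathcal{S}_{\varphib}$. Since $\varphib$ is linearly independent, $\Qb_{\varphib}$ is a genuine Gram matrix, hence positive definite and invertible, and the projection is $\varPi_{\mathcal{S}_{\varphib}}(h)=\Expct{\xb,\thetab}{h\,\varphib\trsps}\,\Qb_{\varphib}^{-1}\,\varphib$. The Pythagorean relation $\norm{h}^{2}\ge\norm{\varPi_{\mathcal{S}_{\varphib}}(h)}^{2}$ then reads
\[
    \ab\trsps\Qb_{(\gb-\zetab)}\,\ab\ge
    \ab\trsps\Rb_{\gb\varphib}\,\Qb_{\varphib}^{-1}\,\Rb_{\gb\varphib}\trsps\,\ab,
\]
where I have used the previous paragraph to write $\Expct{\xb,\thetab}{h\,\varphib\trsps}=\ab\trsps\Rb_{\gb\varphib}$, together with $\norm{h}^{2}=\ab\trsps\Qb_{(\gb-\zetab)}\,\ab$ and $\norm{\varPi_{\mathcal{S}_{\varphib}}(h)}^{2}=\Expct{\xb,\thetab}{h\,\varphib\trsps}\,\Qb_{\varphib}^{-1}\,\Expct{\xb,\thetab}{\varphib\,h}$. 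As this holds for every $\ab$, the matrix difference in \eqref{eqn:cov_ineq} is positive semi-definite, which is the claim.

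As for difficulties, there is no deep obstacle once the orthogonality of $\zetab$ has been used; the only points requiring care are the invertibility of $\Qb_{\varphib}$ (guaranteed by the linear independence of $\varphib$ together with the finite-second-moment assumptions, which make $\Qb_{\varphib}$ a positive-definite Gram matrix) and the validity of the projection formula and of Pythagoras, both of which rest on $\gb,\zetab,\varphib$ all lying in $\mathcal{L}_{2}(\mathcal{S}_{\mathcal{X},\Theta})$. An equivalent and equally short route would be to observe that the joint second-moment matrix of the stacked vector $(\gb-\zetab,\varphib)$, namely
\[
    \begin{pmatrix}
        \Qb_{(\gb-\zetab)} & \Rb_{\gb\varphib} \\
        \Rb_{\gb\varphib}\trsps & \Qb_{\varphib}
    \end{pmatrix}\succeq\bm{0},
\]
is positive semi-definite, and then to invoke the Schur-complement characterization with pivot $\Qb_{\varphib}$; I would mention this as the matrix-analytic counterpart of the projection argument, noting that it uses the same orthogonality identity $\Rb_{(\gb-\zetab)\varphib}=\Rb_{\gb\varphib}$.
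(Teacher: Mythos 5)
Your proposal is correct and takes essentially the same route as the paper's own proof: both fix an arbitrary $\ab\in\Rbb^{L}$, orthogonally project the scalar function $\ab\trsps(\gb-\zetab)$ onto $\mathcal{S}_{\varphib}$, apply the Pythagorean theorem, and use the orthogonality of $\zetab$ to $\mathcal{S}_{\varphib}$ to identify the cross-moment with $\Rb_{\gb\varphib}$ (you invoke this orthogonality when computing $\Expct{\xb,\thetab}{(\gb-\zetab)\,\varphib\trsps}$, the paper inside the projection decomposition, which is the same step). Your closing Schur-complement remark is an equivalent repackaging, but the core argument coincides with the paper's.
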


\begin{IEEEproof}
    Let $\ab\in\Rbb^{L}$ be any vector, and
$\epsilon(\xb,\thetab)\eqdef\ab\trsps(\gb(\xb,\thetab)-\zetab(\xb,\thetab))$. From
\eqref{eqn:proj_decomp}, $\epsilon(\xb,\thetab)$ can be written as
    \begin{IEEEeqnarray*}{rCl}
        \epsilon(\xb,\thetab) &=& \varPi_{\mathcal{S}_{\varphib}}(\epsilon)(\xb,\thetab) +
        \varPi_{\mathcal{S}_{\varphib}^{\perp}}(\epsilon)(\xb,\thetab) \\
        &=&
        \ab\trsps\varPib_{\mathcal{S}_{\varphib}}(\gb-\zetab)(\xb,\thetab)+\ab\trsps\varPib_{\mathcal{S}_{\varphib}^{\perp}}(\gb-\zetab)(\xb,\thetab)
        \\
        &=&
        \ab\trsps\varPib_{\mathcal{S}_{\varphib}}(\gb)(\xb,\thetab)+\ab\trsps\bigl[\varPib_{\mathcal{S}_{\varphib}^{\perp}}(\gb)(\xb,\thetab)-\zetab(\xb,\thetab)\bigr]
        \\
        \IEEEyesnumber\label{eqn:e_projections}
    \end{IEEEeqnarray*}
    since $\zetab(\cdot)\in(\mathcal{S}_{\varphib}^{\perp})^{L}$. Consequently,
by the Pythagorean theorem, we have
    \begin{IEEEeqnarray*}{rCl}
        \norm{\epsilon(\xb,\thetab)}^{2} &=&
        \norm{\ab\trsps\varPib_{\mathcal{S}_{\varphib}}(\gb)(\xb,\thetab)}^{2} 
        \\
        &&{+}\>
        \norm[\big]{\ab\trsps\bigl[\varPib_{\mathcal{S}_{\varphib}^{\perp}}(\gb)(\xb,\thetab)-\zetab(\xb,\thetab)\bigr]}^{2}.
        \IEEEyesnumber\label{eqn:pythagoras1}
    \end{IEEEeqnarray*}
    Since
$\varPib_{\mathcal{S}_{\varphib}}(\gb)(\xb,\thetab)\in(\mathcal{S}_{\varphib})^{L}$,
there exists a matrix $\Cb\in\Rbb^{M\times L}$ such that
$\varPib_{\mathcal{S}_{\varphib}}(\gb)(\xb,\thetab)=\Cb\trsps\varphib(\xb,\thetab)$.
In addition,
$\gb(\xb,\thetab)-\varPib_{\mathcal{S}_{\varphib}}(\gb)(\xb,\thetab)=\varPib_{\mathcal{S}_{\varphib}^{\perp}}(\gb)(\xb,\thetab)\in(\mathcal{S}_{\varphib}^{\perp})^{L}$,
so it satisfies the normal equation
    \begin{equation}
        \Expct[\big]{\xb,\thetab}{(\gb(\xb,\thetab)-\Cb\trsps\varphib(\xb,\thetab))\,\varphib\trsps(\xb,\thetab)}=\bm{0},
    \end{equation}
    which leads to $\Cb=\Qb_{\varphib}^{-1}\Rb_{\gb\varphib}\trsps$, and
$\varPib_{\mathcal{S}_{\varphib}}(\gb)(\xb,\thetab)=
\Rb_{\gb\varphib}\Qb_{\varphib}^{-1}\,\varphib(\xb,\thetab)$. Noticing that
$\norm{\epsilon(\xb,\thetab)}^{2}=\ab\trsps\Qb_{(\gb-\zetab)}\ab$ and
$\norm{\ab\trsps\varPib_{\mathcal{S}_{\varphib}}(\gb)(\xb,\thetab)}^{2}=
\ab\trsps\Rb_{\gb\varphib}\Qb_{\varphib}^{-1}\Rb_{\gb\varphib}\trsps\ab$
    %
    %
    leads to the following equivalent form of \eqref{eqn:pythagoras1}:
    \begin{IEEEeqnarray*}{'rCl}
        \ab\trsps\Qb_{(\gb-\zetab)}\ab &=&
        \ab\trsps\Rb_{\gb\varphib}\Qb_{\varphib}^{-1}\Rb_{\gb\varphib}\trsps\ab
		\\
        &&{+}\>
        \norm[\big]{\ab\trsps\bigl[\varPib_{\mathcal{S}_{\varphib}^{\perp}}(\gb)(\xb,\thetab)-\zetab(\xb,\thetab)\bigr]}^{2}.
        \IEEEyesnumber\label{eqn:pythagoras1bis}
    \end{IEEEeqnarray*}
    It implies the inequality
    $\ab\trsps\Qb_{(\gb-\zetab)}\ab\geq\ab\trsps\Rb_{\gb\varphib}\Qb_{\varphib}^{-1}\Rb_{\gb\varphib}\trsps\ab$,
    for any $\ab\in\Rbb^{L}$, hence the covariance inequality \eqref{eqn:cov_ineq}.
\end{IEEEproof}

\subsection{A general class of BLBs} \label{ssec:general_class_BLBs}

If we set $\gb(\xb,\thetab)\eqdef \gb(\thetab)$ the quantity to be estimated,
and $\zetab(\xb,\thetab)\eqdef
\Expct[]{\thetab\given\xb}{\gb(\thetab)}=\widehat{\gb}^{\mathsf{MMSE}}(\xb)$ the
posterior mean of $\gb(\thetab)$ (that is the MMSE estimator), then, provided
that $\Expct{\thetab\given\xb}{\gb(\thetab)} \in
(\mathcal{S}_{\varphib}^{\perp})^{L}$, the inequality \eqref{eqn:cov_ineq} becomes
\begin{equation}
    \mathbf{MSE}(\widehat{\gb}^{\mathsf{MMSE}})\succeq
    \Rb_{\gb\varphib}\:\Qb_{\varphib}^{-1}\:\Rb_{\gb\varphib}
    \label{eqn:LB_MMSE}
\end{equation}
%
where, for any estimator $\widehat{\gb}(\cdot)$,
$\mathbf{MSE}(\widehat{\gb})\eqdef\Qb_{(\gb-\widehat{\gb})}=\Expct[\big]{\xb,\thetab}{(\gb(\thetab)-\widehat{\gb}(\xb))\cdot(\gb(\thetab)-\widehat{\gb}(\xb))\trsps}$
denotes the mean-square error matrix of $\widehat{\gb}(\cdot)$. Moreover, since,
for any estimator $\widehat{\gb}(\cdot)
\in\mathcal{L}_{2}(\mathcal{S}_{\mathcal{X}})$,
$\mathbf{MSE}(\widehat{\gb})\succeq\mathbf{MSE}(\widehat{\gb}^{\mathsf{MMSE}})$,
any lower bound on the MSE of $\widehat{\gb}^{\mathsf{MMSE}}(\cdot)$ is also a
lower bound on the MSE of $\widehat{\gb}(\cdot)$, whether $\widehat{\gb}(\cdot)$
lies in $(\mathcal{S}_{\varphib}^{\perp })^{L}$ or not. It is worth
noticing that a sufficient condition for \eqref{eqn:LB_MMSE} to hold is
$\Expct{\thetab\given\xb}{\gb(\thetab)}\in (\mathcal{S}_{\varphib}^{\perp })^{L}$, or
equivalently
\begin{equation}
\Expct*{\xb}{
    \Expct{\thetab\given\xb}{\gb(\thetab)}
    \Expct{\thetab\given\xb}{\varphib(\xb,\thetab)}
} =\bm{0}.
\label{eqn:LB_existence_condition}
\end{equation}
A well known subset of the Euclidean space $\mathcal{L}_{2}\left(
\mathcal{S}_{\mathcal{X},\Theta }\right)$ satisfying
\eqref{eqn:LB_existence_condition} is $\mathcal{H}_{\phi}$ defined as
\cite[(1)]{WW88}, \cite[(6)]{TT10b}, \cite{WW14}:
\begin{IEEEeqnarray}{'l}
\mathcal{H}_{\phi }=\bigl\{
    \phi(\xb,\thetab)\in\mathcal{L}_{2}(\mathcal{S}_{\mathcal{X},\Theta})
    \given[\big]
    \Expct{\thetab\given\xb}{\phi(\xb,\thetab)} = 0 \IEEEnonumber\\
    \hspace{15em}
    \text{for a.e. }\xb\in \mathcal{X}
\bigr\},
\label{eqn:def_Hphi}    
\end{IEEEeqnarray}
which is the subset of the Euclidean space
$\mathcal{L}_{2}(\mathcal{S}_{\mathcal{X},\Theta})$ orthogonal to any
$\widehat{\gb}(\xb)\in\mathcal{L}_{2}(\mathcal{S}_{\mathcal{X}})$. Therefore,
there may exist vector functions $\varphib (\xb,\thetab)$ that do not lie in
$\mathcal{H}_{\phi }$ but satisfy \eqref{eqn:LB_existence_condition}; such a
function $\varphib(\xb,\thetab)$ could possibly lead to tighter BLBs since it is
required to be orthogonal only to a single function of
$\mathcal{L}_{2}(\mathcal{S}_{\mathcal{X}})$, that is
$\Expct{\thetab\given\xb}{\gb(\thetab)}$.

\subsection{The Weiss-Weinstein class of BLBs} \label{ssec:WWF}

As stated in \cite{TT10b}, the Weiss-Weinstein class of BLBs is obtained via
projection of $\gb(\thetab)$ on $\mathcal{S}_{\phib}$, that is a closed subspace
of $\mathcal{H}_{\phi}$ made up of linear combinations of elements in
$\mathcal{H}_{\phi}$:
\begin{equation}
    \mathcal{S}_{\phib}\eqdef
    \enstq[\big]{
        \lambdab\trsps\phib(\cdot)\given
        \lambdab\in\Rbb^{M}
    },
\end{equation}
where
$\phib(\cdot)\eqdef(\phi_{1}(\cdot),\ldots,\phi_{M}(\cdot))\trsps\in\mathcal{H}_{\phi}^{M}$.
Then, any estimator $\widehat{\gb}(\xb)$ lies in $\mathcal{S}_{\phib}^{\perp}$,
and setting $\gb(\xb,\thetab)=\gb(\thetab)$ and
$\zetab(\xb,\thetab)\eqdef\widehat{\gb}(\xb)$ in \eqref{eqn:pythagoras1} yields,
for any $\ab\in\Rbb^{L}$, 
\begin{equation}
    \norm{\ab\trsps(\gb(\thetab)-\widehat{\gb}(\xb))}^{2}\geq
    \norm{\ab\trsps\varPib_{\mathcal{S}_{\phib}}(\gb)(\xb,\thetab)}^{2},
\end{equation}
which leads to 
\begin{equation}
    \mathbf{MSE}(\widehat{\gb}) \succeq
    \Rb_{\gb\phib}\:\Qb_{\phib}^{-1}\,\Rb_{\gb\phib}\trsps.
    \label{eqn:LB_MSE_WWF}
\end{equation}
This inequality is referred to as the Weiss-Weinstein inequality in the
following. 

\section{A general class of tighter Bayesian lower bounds} \label{sec:TBLBs} 

\subsection{A tighter version of the covariance inequality}
\label{ssec:tighter_bayes_info_ineq}

We first provide a tighter version of the covariance inequality \eqref{eqn:cov_ineq}. Since, for any function $\epsilon(\cdot)\in\mathcal{L}_{2}(\mathcal{S}_{\mathcal{X},\Theta})$,
\begin{equation}
\norm{\epsilon(\xb,\thetab)}^{2}=
\Expct[\big]{\xb}{\Expct[\big]{\thetab|\xb}{\epsilon^{2}(\xb,\thetab)}},
\label{eqn:split_MSE}
\end{equation}
it is possible to rewrite all the results from Section \ref{ssec:background_cov_ineq} for some given $\xb\in\mathcal{S}_{\mathcal{X}}$, with respect to the posterior distribution $p(\thetab\given\xb)$. In other words, \eqref{eqn:split_MSE} suggests to derive a lower bound on the posterior MSE $\Expct[\big]{\thetab|\xb}{\epsilon^{2}(\xb,\thetab)}$, and then average it w.r.t. $\xb$, which leads to another bound on the global MSE $\Expct[\big]{\xb,\thetab}{\epsilon^{2}(\xb,\thetab)}$. In the next theorem (Theorem \ref{thm:tighter_cov_ineq}), we show that the bound so obtained is at least as tight as the standard one (from the previous section). It is done by noticing that, geometrically, the proposed bound corresponds to the hypotenuse of a right-angled triangle, while the standard bound corresponds to one of the two other sides of this triangle (see Figure \ref{fig:geometric_interpretation}).

In order to state the theorem, let us introduce the subspace $\mathcal{W}_{2}(\mathcal{S}_{\mathcal{X},\Theta})$ of $\mathcal{L}_{2}(\mathcal{S}_{\mathcal{X},\Theta})$, defined as
\begin{IEEEeqnarray*}{'ll}
\mathcal{W}_{2}(\mathcal{S}_{\mathcal{X},\Theta})\eqdef
\Bigl\{
    &\zeta(\cdot)\in\mathcal{L}_{2}(\mathcal{S}_{\mathcal{X},\Theta})\given[\Big]
    \\
    &\zeta(\xb,\cdot)\in\mathcal{L}_{2}(\mathcal{S}_{\Theta|\xb})
        \text{ for a.e. }\xb\in\mathcal{S}_{\mathcal{X}}
        \Bigr\},
\IEEEyesnumber\label{eqn:def_W2}
\end{IEEEeqnarray*}
and the inner product of two functions $\xi(\cdot),\zeta(\cdot)\in\mathcal{W}_{2}(\mathcal{S}_{\mathcal{X},\Theta})$, given some $\xb\in\mathcal{S}_{\mathcal{X}}$, defined by
\begin{equation}
\scalprod{\xi(\xb,\thetab)}{\zeta(\xb,\thetab)}_{|\xb}\eqdef
\Expct[\big]{\thetab|\xb}{\xi(\xb,\thetab)\,\zeta(\xb,\thetab)}.
\label{eqn:innerprod2}
\end{equation}
Accordingly, the related norm is denoted by $\norm{\cdot}_{\given\xb}$. As a consequence, for some given $\xb\in\mathcal{X}$, let
$\mathcal{S}_{\varphib\given\xb}$ denote the linear span of a family of functions $\varphi_{1}(\xb,\cdot),\ldots,\varphi_{M}(\xb,\cdot)$, where $\varphi_{1}(\cdot),\ldots,\varphi_{M}(\cdot)\in\mathcal{W}_{2}
(\mathcal{S}_{\mathcal{X},\Theta})$, and
let $\mathcal{S}_{\varphib\given\xb}^{\perp}$ denote its orthogonal complement for
the inner product given $\xb$ \eqref{eqn:innerprod2}, i.e.,
\begin{equation}
    \mathcal{S}_{\varphib\given\xb}^{\perp}\eqdef\enstq[\Big]{
        \zeta(\cdot)\in\mathcal{W}_{2}(\mathcal{S}_{\mathcal{X},\Theta})
        \given
        \Expct[\big]{\thetab|\xb}{\zeta(\xb,\thetab)\,\varphib(\xb,\thetab)}
        =\bm{0}
    }.
\end{equation}
Finally, let us denote by $\widetilde{\mathcal{S}}_{\varphib}^{\perp}$ the subspace of
$\mathcal{S}_{\varphib}^{\perp}$ defined by
\begin{equation}
    \widetilde{\mathcal{S}}_{\varphib}^{\perp}\eqdef\enstq[\Big]{
        \zeta(\cdot)\in\mathcal{W}_{2}(\mathcal{S}_{\mathcal{X},\Theta})
        \given
        \zeta(\xb,\cdot)\in\mathcal{S}_{\varphib|\xb}^{\perp}
        \text{ for a.e. } \xb\in\mathcal{X}
    }.
\end{equation}
Using these notations, we can now state the main theorem of this section.

\begin{figure}[t!]
    \centering
    \includegraphics[width=\linewidth]{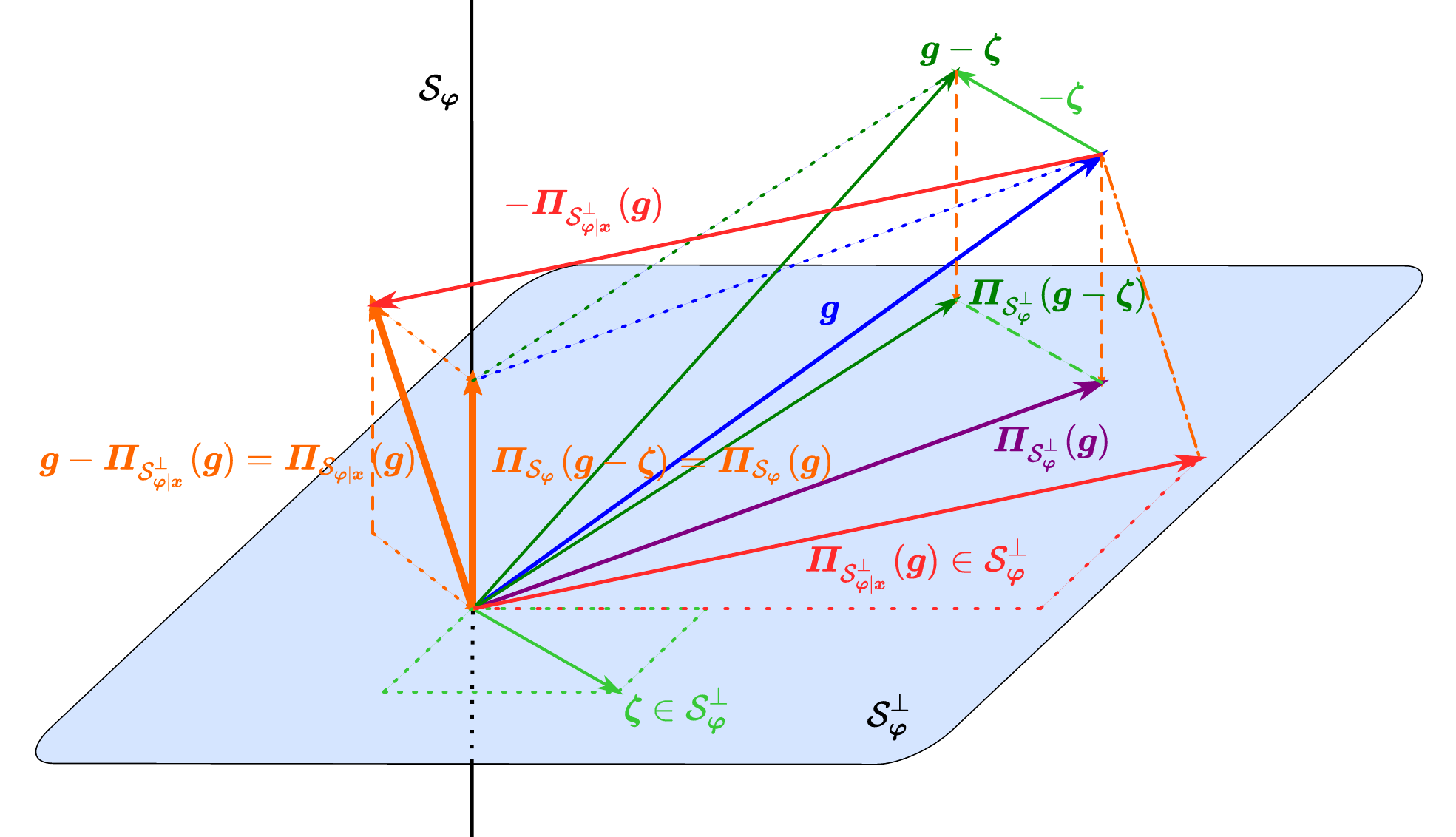}
    \caption{
        Geometrical interpretation of the presented lower bounds: the standard one corresponds to the (squared norm of the) vertical thick orange arrow $\varPib_{\mathcal{S}_{\varphib}}(\gb)$ (or equivalently the vertical orange dashed (`--~--') segments), while the tighter bound corresponds to the (squared norm of the) slanted thick orange arrow $\varPib_{\mathcal{S}_{\varphib\given\xb}}(\gb)$ (or equivalently the slanted orange ``dash-dot'' (`--~$\cdot$') segment).
    }
    \label{fig:geometric_interpretation}
\end{figure}

\begin{theorem}[Tighter covariance inequality] \label{thm:tighter_cov_ineq}
    Let
    $\gb(\cdot)\eqdef(g_{1}(\cdot),\ldots,g_{L}(\cdot))\trsps\in\bigl(\mathcal{W}_{2}(\mathcal{S}_{\mathcal{X},\Theta})\bigr)^{L}$
    be some family of $L$ functions, and $\varphib(\cdot)\eqdef (\varphi_{1}(\cdot),\ldots,\varphi_{M}(\cdot))\trsps \in\bigl(\mathcal{W}_{2}(\mathcal{S}_{\mathcal{X},\Theta})\bigr)^{M}$ be some family of $M$ linearly independent functions. Then, for any family of $L$ functions $\zetab(\cdot)\eqdef(\zeta_{1}(\cdot),\ldots,\zeta_{L}(\cdot))\trsps
    \in(\widetilde{\mathcal{S}}_{\varphib}^{\perp})^{L}$,
    \begin{equation}
        \Qb_{(\gb-\zetab)}\succeq
        \Expct[\big]{\xb}{\Rb_{\gb\varphib\given\xb}\:
        \Qb_{\varphib\given\xb}^{-1}\,
        \Rb_{\gb\varphib\given\xb}\trsps}\succeq
        \Rb_{\gb\varphib}\:\Qb_{\varphib}^{-1}\,\Rb_{\gb\varphib}\trsps,
        \label{eqn:tighter_cov_ineq}
    \end{equation}
    where $\Rb_{\gb\varphib\given\xb}$ is the $L\times M$ matrix defined by
    \begin{equation}
        \Rb_{\gb\varphib\given\xb}\eqdef
        \Expct[\big]{\thetab\given\xb}{\gb(\xb,\thetab)\,\varphib\trsps(\xb,\thetab)},
        \label{eqn:def_Rgphi_x}
    \end{equation}
    and $\Qb_{\varphib\given\xb}$ is the $M\times M$ matrix defined by
    \begin{equation}
        \Qb_{\varphib\given\xb}\eqdef
        \Expct[\big]{\thetab\given\xb}{\varphib(\xb,\thetab)\,\varphib\trsps(\xb,\thetab)}.
        \label{eqn:def_Qphi_x}
    \end{equation}
\end{theorem}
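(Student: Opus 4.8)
The plan is to establish the two inequalities in \eqref{eqn:tighter_cov_ineq} separately. The left one is obtained by replaying the argument of Theorem \ref{thm:cov_ineq} \emph{conditionally}, i.e.\ for a.e.\ fixed $\xb\in\mathcal{S}_{\mathcal{X}}$ inside the Hilbert space $\mathcal{L}_{2}(\mathcal{S}_{\Theta\given\xb})$ equipped with the posterior inner product \eqref{eqn:innerprod2}, and then averaging over $\xb$ through \eqref{eqn:split_MSE}. The right inequality is the genuinely new content: it compares the $\xb$-dependent projection against the global one, and this is where the tightening originates, consistent with the right-triangle picture of Figure \ref{fig:geometric_interpretation}.

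For the left inequality I would fix $\ab\in\Rbb^{L}$ and set $\epsilon(\xb,\thetab)\eqdef\ab\trsps(\gb(\xb,\thetab)-\zetab(\xb,\thetab))$ exactly as in Theorem \ref{thm:cov_ineq}, but decompose it, for a.e.\ fixed $\xb$, along $\mathcal{S}_{\varphib\given\xb}$ and $\mathcal{S}_{\varphib\given\xb}^{\perp}$. Because $\zetab(\cdot)\in(\widetilde{\mathcal{S}}_{\varphib}^{\perp})^{L}$, we have $\zetab(\xb,\cdot)\in(\mathcal{S}_{\varphib\given\xb}^{\perp})^{L}$ for a.e.\ $\xb$, so the same cancellation as in \eqref{eqn:e_projections} holds conditionally. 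Solving the conditional normal equation gives $\varPib_{\mathcal{S}_{\varphib\given\xb}}(\gb)(\xb,\thetab)=\Rb_{\gb\varphib\given\xb}\Qb_{\varphib\given\xb}^{-1}\varphib(\xb,\thetab)$, and the conditional Pythagorean identity yields
\begin{equation*}
\Expct{\thetab\given\xb}{\epsilon^{2}(\xb,\thetab)}\geq\ab\trsps\Rb_{\gb\varphib\given\xb}\Qb_{\varphib\given\xb}^{-1}\Rb_{\gb\varphib\given\xb}\trsps\ab
\end{equation*}
for a.e.\ $\xb$. Taking $\Expct{\xb}{\cdot}$ and invoking \eqref{eqn:split_MSE} turns the left-hand side into $\ab\trsps\Qb_{(\gb-\zetab)}\ab$; since $\ab$ is arbitrary, the left inequality follows. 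This step is routine once one checks that $\Qb_{\varphib\given\xb}\succ\bm{0}$ for a.e.\ $\xb$ (conditional linear independence of the $\varphi_{m}(\xb,\cdot)$), which I would record as a standing regularity requirement.

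The right inequality is the main obstacle. Writing $\Sb(\xb)\eqdef\Rb_{\gb\varphib\given\xb}\Qb_{\varphib\given\xb}^{-1}\Rb_{\gb\varphib\given\xb}\trsps$, the key observation is that, with $\Qb_{\varphib\given\xb}\succ\bm{0}$, the Schur-complement criterion gives, for a.e.\ $\xb$,
\begin{equation*}
\begin{pmatrix}\Sb(\xb) & \Rb_{\gb\varphib\given\xb}\\ \Rb_{\gb\varphib\given\xb}\trsps & \Qb_{\varphib\given\xb}\end{pmatrix}\succeq\bm{0}.
\end{equation*}
Since the cone of positive semi-definite matrices is convex and stable under expectation, and since $\Expct{\xb}{\Rb_{\gb\varphib\given\xb}}=\Rb_{\gb\varphib}$ and $\Expct{\xb}{\Qb_{\varphib\given\xb}}=\Qb_{\varphib}$ by the tower rule, averaging this block matrix over $\xb$ preserves positive semi-definiteness; a final application of the Schur-complement criterion (now using $\Qb_{\varphib}\succ\bm{0}$) delivers precisely $\Expct{\xb}{\Sb(\xb)}\succeq\Rb_{\gb\varphib}\Qb_{\varphib}^{-1}\Rb_{\gb\varphib}\trsps$. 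Equivalently, and matching Figure \ref{fig:geometric_interpretation}, for scalar $\ab\trsps\gb$ the quadratic form $\ab\trsps\Expct{\xb}{\Sb(\xb)}\ab$ is the squared joint norm of the projection of $\ab\trsps\gb$ onto the larger subspace of $\xb$-dependent combinations $\{\lambdab(\xb)\trsps\varphib\}$, whereas $\ab\trsps\Rb_{\gb\varphib}\Qb_{\varphib}^{-1}\Rb_{\gb\varphib}\trsps\ab$ is the squared norm of its projection onto the smaller subspace $\mathcal{S}_{\varphib}$ of constant combinations, so the inequality is the Pythagorean monotonicity of orthogonal projection onto nested subspaces. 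In either route the delicate point is the measurability and integrability needed to make the $\xb$-dependent subspace (equivalently the pointwise matrices $\Sb(\xb)$) well defined and to justify exchanging expectation with the quadratic operations, which I would settle under the standing $\mathcal{W}_{2}(\mathcal{S}_{\mathcal{X},\Theta})$ assumptions.
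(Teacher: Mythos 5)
Your proof is correct, and its second half takes a genuinely different route from the paper's. The left inequality is handled exactly as in the paper: conditional decomposition of $\epsilon(\xb,\thetab)=\ab\trsps(\gb(\xb,\thetab)-\zetab(\xb,\thetab))$ for a.e.\ fixed $\xb$, the conditional normal equation giving $\varPib_{\mathcal{S}_{\varphib\given\xb}}(\gb)=\Rb_{\gb\varphib\given\xb}\Qb_{\varphib\given\xb}^{-1}\varphib$, the conditional Pythagorean identity, then averaging through \eqref{eqn:split_MSE}. For the right inequality, the paper stays inside the Hilbert-space picture: it notes that $\varPib_{\mathcal{S}_{\varphib\given\cdot}^{\perp}}(\gb)\in(\widetilde{\mathcal{S}}_{\varphib}^{\perp})^{L}\subset(\mathcal{S}_{\varphib}^{\perp})^{L}$, so it may re-apply the \emph{global} Pythagorean identity \eqref{eqn:pythagoras1} with the admissible choice $\zetab=\varPib_{\mathcal{S}_{\varphib\given\xb}^{\perp}}(\gb)$; since $\gb-\varPib_{\mathcal{S}_{\varphib\given\xb}^{\perp}}(\gb)=\varPib_{\mathcal{S}_{\varphib\given\xb}}(\gb)$, this yields \eqref{eqn:ineq_BLBs_proj} together with an \emph{explicit} remainder, namely $\norm[\big]{\ab\trsps\bigl[\varPib_{\mathcal{S}_{\varphib}^{\perp}}(\gb)-\varPib_{\mathcal{S}_{\varphib\given\xb}^{\perp}}(\gb)\bigr]}^{2}$. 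Your primary argument is instead purely matrix-analytic: Schur-complement positivity of the block matrix built from $\Sb(\xb)$, $\Rb_{\gb\varphib\given\xb}$, $\Qb_{\varphib\given\xb}$, stability of the positive semi-definite cone under expectation, the tower-rule identities $\Expct{\xb}{\Rb_{\gb\varphib\given\xb}}=\Rb_{\gb\varphib}$ and $\Expct{\xb}{\Qb_{\varphib\given\xb}}=\Qb_{\varphib}$, and a second Schur complement. This is valid (it is precisely the joint convexity of $(\Bb,\Cb)\mapsto\Bb\Cb^{-1}\Bb\trsps$), and it buys economy: no projection geometry is needed, only $\Qb_{\varphib\given\xb}\succ\bm{0}$ a.e.\ and $\Qb_{\varphib}\succ\bm{0}$, both of which the theorem statement already presupposes by writing the inverses. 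What it does \emph{not} immediately deliver, and the paper's route does, is the identification of the gap between the two bounds as a squared norm; that expression is exactly what the paper exploits in Section \ref{ssec:equality_BCRBS} to characterize equality via \eqref{eqn:condition_equality}. Your secondary sketch (monotonicity of joint-norm projections onto the nested subspaces of $\xb$-dependent versus constant combinations of $\varphib$) is essentially the paper's argument restated, and it correctly recognizes that the conditional projection coincides with the joint-norm projection onto the larger, $\xb$-dependent subspace. The measurability and integrability caveats you flag are likewise glossed over by the paper.
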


\begin{IEEEproof}
    For any arbitrary vector $\ab\in\Rbb^{L}$ and some given $\xb\in\mathcal{S}_{\mathcal{X}}$, let us decompose $\epsilon(\xb,\thetab)\eqdef\ab\trsps(\gb(\xb,\thetab)-\zetab(\xb,\thetab))$ into
    \begin{IEEEeqnarray*}{rCl}
    \epsilon(\xb,\thetab)
    &=& \varPi_{\mathcal{S}_{\varphib|\xb}}(\epsilon)(\xb,\thetab) +
    \varPi_{\mathcal{S}_{\varphib|\xb}^{\perp}}(\epsilon)(\xb,\thetab) \\
    &=& \ab\trsps\varPib_{\mathcal{S}_{\varphib\given\xb}}(\gb)(\xb,\thetab)\\
    & & {+}\>\ab\trsps\bigl[\varPib_{\mathcal{S}_{\varphib\given\xb}^{\perp}}(\gb)
    (\xb,\thetab)-\zetab(\xb,\thetab)\bigr],
    \IEEEyesnumber\label{eqn:e_projections_2}
    \end{IEEEeqnarray*}
    as in \eqref{eqn:e_projections}.
%
%
    By the Pythagorean theorem, we consequently have
    \begin{IEEEeqnarray*}{'rCl}
        \norm{\epsilon(\xb,\thetab)}_{\given\xb}^{2} &=&
        \norm{\ab\trsps\varPib_{\mathcal{S}_{\varphib\given\xb}}(\gb)(\xb,\thetab)}_{\given\xb}^{2}
         \\
        &&+\>
        \norm[\big]{\ab\trsps\bigl[\varPib_{\mathcal{S}_{\varphib\given\xb}^{\perp}}(\gb)(\xb,\thetab)-\zetab(\xb,\thetab)\bigr]}_{\given\xb}^{2}.
        \IEEEyesnumber\label{eqn:pythagoras2}
    \end{IEEEeqnarray*}
    We can then show that 
    $\varPib_{\mathcal{S}_{\varphib\given\xb}}(\gb)(\xb,\thetab) = 
    \Rb_{\gb\varphib\given\xb}\Qb_{\varphib\given\xb}^{-1}\,\varphib(\xb,\thetab)$, and finally, similarly as in Section \ref{ssec:background_cov_ineq}, and after applying $\Expct{\xb}{\cdot}$, we obtain the left inequality in \eqref{eqn:tighter_cov_ineq}.
    
    In addition, as $\varPib_{\mathcal{S}_{\varphib\given\xb}^{\perp}}(\gb)(\xb,\thetab)\in(\mathcal{S}_{\varphib\given\xb}^{\perp})^{L}$ for a.e. $\xb\in\mathcal{S}_{\mathcal{X}}$, $\varPib_{\mathcal{S}_{\varphib\given\cdot}^{\perp}}(\gb)(\cdot)\in(\widetilde{\mathcal{S}}_{\varphib}^{\perp})^{L}$, and since $\widetilde{\mathcal{S}}_{\varphib}^{\perp}\subset\mathcal{S}_{\varphib}^{\perp}$, then $\varPib_{\mathcal{S}_{\varphib\given\cdot}^{\perp}}(\gb)(\cdot)\in(\mathcal{S}_{\varphib}^{\perp})^{L}$. Hence, it is possible to rewrite \eqref{eqn:pythagoras1} with $\zetab(\xb,\thetab)=\varPib_{\mathcal{S}_{\varphib\given\xb}^{\perp}}(\gb)(\xb,\thetab)$, i.e.,
    \begin{IEEEeqnarray*}{'rCl}
        \IEEEeqnarraymulticol{3}{l}{
            \norm[\big]{
                \ab\trsps\bigl[
                    \gb(\xb,\thetab)-\varPib_{\mathcal{S}_{\varphib\given\xb}^{\perp}}(\gb)(\xb,\thetab)
                \bigr]
            }^{2}
        }\\
    \qquad&=&
    \norm[\big]{\ab\trsps\varPib_{\mathcal{S}_{\varphib}}(\gb)(\xb,\thetab)}^{2}
    \\
    &&+\> \norm[\big]{\ab\trsps\bigl[
            \varPib_{\mathcal{S}_{\varphib}^{\perp}}(\gb)(\xb,\thetab)
            - \varPib_{\mathcal{S}_{\varphib\given\xb}^{\perp}}(\gb)(\xb,\thetab)
        \bigr]}^{2},
    \IEEEyesnumber
    \end{IEEEeqnarray*}
    and since $\gb(\xb,\thetab)-\varPib_{\mathcal{S}_{\varphib\given\xb}^{\perp}}(\gb)(\xb,\thetab)=\varPib_{\mathcal{S}_{\varphib\given\xb}}(\gb)(\xb,\thetab)$, we finally obtain
    \begin{equation}
    \norm[\big]{\ab\trsps\varPib_{\mathcal{S}_{\varphib\given\xb}}(\gb)(\xb,\thetab)}^{2}
    \geq
    \norm[\big]{\ab\trsps\varPib_{\mathcal{S}_{\varphib}}(\gb)(\xb,\thetab)}^{2},
    \label{eqn:ineq_BLBs_proj}
    \end{equation}
    that is, for any $\ab\in\Rbb^{L}$,
    \begin{equation}
    \ab\trsps\Expct[\big]{\xb}{
        \Rb_{\gb\varphib\given\xb}
        \Qb_{\varphib\given\xb}^{-1}
        \Rb_{\gb\varphib\given\xb}\trsps
    }\,\ab
    \geq
    \ab\trsps\Rb_{\gb\varphib}
        \Qb_{\varphib}^{-1}
        \Rb_{\gb\varphib}\trsps
    \,\ab,
    \label{eqn:ineq_BLBs_scal}
    \end{equation}
    which proves the right inequality in \eqref{eqn:tighter_cov_ineq}.
\end{IEEEproof}

It is worth noticing that the analytical expressions for the posterior $p(\thetab\given\xb)$ and $p(\xb)$ are not needed for the calculation of the tigther BLBs: from \eqref{eqn:tighter_cov_ineq}, the tighter BLB is given as
\begin{equation}
\mathrm{TBLB} \eqdef \Expct[\big]{\xb}{
    \Rb_{\gb\varphib\given\xb}
    \Qb_{\varphib\given\xb}^{-1}
    \Rb_{\gb\varphib\given\xb}\trsps
}, \label{eqn:TBLB}
\end{equation}
with $\Rb_{\gb\varphib\given\xb}$ and $\Qb_{\varphib\given\xb}$ defined in \eqref{eqn:def_Rgphi_x} and \eqref{eqn:def_Qphi_x} respectively. Using the identity $p(\thetab\given\xb)=p(\xb,\thetab)/p(\xb)$ in \eqref{eqn:def_Rgphi_x} and \eqref{eqn:def_Qphi_x}, we obtain
\begin{IEEEeqnarray}{rCl}
\IEEEyesnumber\IEEEyessubnumber*
\Rb_{\gb\varphib\given\xb} &=&
\frac{\widetilde{\Rb}_{\xb}}{p(\xb)},
\label{eqn:Rgphi_x_bis} \\
\Qb_{\varphib\given\xb}
&=& \frac{\widetilde{\Qb}_{\xb}}{p(\xb)},
\label{eqn:Qphi_x_bis}
\end{IEEEeqnarray}
after defining the integrals
\begin{IEEEeqnarray}{rCl}
\IEEEyesnumber\IEEEyessubnumber*
\widetilde{\Rb}_{\xb}&\eqdef& \int_{\mathcal{S}_{\Theta\given\xb}}\!\!\! \gb(\xb,\thetab)\,\varphib(\xb,\thetab)\trsps\, p(\xb,\thetab) \diff\thetab,
\label{eqn:def_Rtilde_x} \\
\widetilde{\Qb}_{\xb}&\eqdef&\int_{\mathcal{S}_{\Theta\given\xb}}\!\!\! \varphib(\xb,\thetab)\,\varphib(\xb,\thetab)\trsps\, p(\xb,\thetab) \diff\thetab.
\label{eqn:def_Qtilde_x}
\end{IEEEeqnarray}
Substituting \eqref{eqn:Rgphi_x_bis} and \eqref{eqn:Qphi_x_bis} into \eqref{eqn:TBLB}, we obtain
\begin{IEEEeqnarray*}{rCl}
\mathrm{TBLB} &=& \Expct[\big]{\xb}{
    \Rb_{\gb\varphib\given\xb}
    \Qb_{\varphib\given\xb}^{-1}
    \Rb_{\gb\varphib\given\xb}\trsps
} \\
&=& \int_{\mathcal{S}_{\mathcal{X}}}\!\!
    \Rb_{\gb\varphib\given\xb}
    \Qb_{\varphib\given\xb}^{-1}
    \Rb_{\gb\varphib\given\xb}\trsps \, p(\xb)\diff\xb\\
&=& \int_{\mathcal{S}_{\mathcal{X}}}\!\!
    \widetilde{\Rb}_{\xb}
    \widetilde{\Qb}_{\xb}^{-1}
    \widetilde{\Rb}_{\xb}\trsps\diff\xb,
    \IEEEyesnumber
\end{IEEEeqnarray*}
where all $p(\thetab\given\xb)$ and $p(\xb)$ terms have disappeared. Hence, it is not required to know the analytical expressions of these two distributions. We only need to know the expression of $p(\xb,\thetab)$, which is very often readily available, and be able to calculate integrals involving the joint distribution $p(\xb,\thetab)$.

\subsection{The class of Tighter Weiss-Weinstein BLBs} \label{ssec:TWWF}

A tighter counterpart to \eqref{eqn:LB_MSE_WWF} can be obtained by letting $\gb(\xb,\thetab)\eqdef\gb(\thetab)$, and $\zetab(\xb,\thetab)\eqdef\Expct{\thetab|\xb}{\gb(\thetab)}=\widehat{\gb}^{\mathsf{MMSE}}(\xb)$. In order to apply \eqref{eqn:tighter_cov_ineq}, it is necessary that $\widehat{\gb}^{\mathsf{MMSE}}(\xb)\in(\widetilde{\mathcal{S}}_{\varphib}^{\perp})^{L}$, which holds iff, for a.e. $\xb\in\mathcal{S}_{\mathcal{X}}$, and for $\ell=1,\ldots,L$,
\begin{equation*}
\Expct[\big]{\thetab|\xb}{\varphib(\xb,\thetab)\,\widehat{g}_{\ell}^{\mathsf{MMSE}}(\xb)}=
\widehat{g}_{\ell}^{\mathsf{MMSE}}(\xb)\,
\Expct[\big]{\thetab|\xb}{\varphib(\xb,\thetab)}=\bm{0},
\end{equation*}
i.e., iff $\Expct[\big]{\thetab|\xb}{\varphib(\xb,\thetab)}=\bm{0}$ for a.e. $\xb\in\mathcal{S}_{\mathcal{X}}$, that is, iff $\varphib(\cdot)\eqdef\phib(\cdot)\in\mathcal{H}_{\phib}^{L}$ are generating functions of BLBs in the WWF. In such a case, we obtain
\begin{IEEEeqnarray*}{/rCl}
\mathbf{MSE}(\widehat{\gb})&\succeq&\mathbf{MSE}(\widehat{\gb}^{\mathsf{MMSE}}) 
\\
&\succeq& \Expct[\big]{\xb}{\Rb_{\gb\phib\given\xb}\Qb_{\phib\given\xb}^{-1}\,\Rb_{\gb\phib\given\xb}\trsps}
\succeq\Rb_{\gb\phib}\Qb_{\phib}^{-1}\,\Rb_{\gb\phib}\trsps.
\IEEEyesnumber\label{eqn:tighter_LB_MSE_WWF}
\end{IEEEeqnarray*}
As can be seen from \eqref{eqn:ineq_BLBs_proj}, the classical BLB, $\Rb_{\gb\phib}\Qb_{\phib}^{-1}\,\Rb_{\gb\phib}$, relates to squared norms of vectors lying in $\mathcal{S}_{\phib}^{\perp}$ (namely, $\varPib_{\mathcal{S}_{\phib}^{\perp}}(\gb)(\xb,\thetab)$), while the tighter BLB, $\Expct[\big]{\xb}{\Rb_{\gb\phib\given\xb}\Qb_{\phib\given\xb}^{-1}\,\Rb_{\gb\phib\given\xb}\trsps}$, relates to squared norms of vectors lying in $\widetilde{\mathcal{S}}_{\phib}^{\perp}$ (namely, $\varPib_{\mathcal{S}_{\phib|\xb}^{\perp}}(\gb)(\xb,\thetab)$). Yet, $\mathcal{L}_{2}(\mathcal{S}_{\mathcal{X}})\subset\widetilde{\mathcal{S}}_{\phib}^{\perp}\subset\mathcal{S}_{\phib}^{\perp}$, which explains why the tighter BLB indeed is tighter.

\subsection{Condition for equality between classical and tighter BLBs}\label{ssec:equality_BCRBS}

From \eqref{eqn:ineq_BLBs_proj}, the inequality between the classical and the tighter lower bounds in \eqref{eqn:tighter_LB_MSE_WWF} becomes an equality iff 
\begin{equation}
\norm[\big]{\ab\trsps\bigl[
    \varPib_{\mathcal{S}_{\varphib}^{\perp}}(\gb)(\xb,\thetab)
    - \varPib_{\mathcal{S}_{\varphib\given\xb}^{\perp}}(\gb)(\xb,\thetab)
    \bigr]}^{2}=0
\end{equation}
for all $\ab\in\Rbb^{L}$, i.e., iff
\begin{equation} 
    \varPib_{\mathcal{S}_{\varphib}}(\gb)(\xb,\thetab)
    = \varPib_{\mathcal{S}_{\varphib\given\xb}}(\gb)(\xb,\thetab).
    \label{eqn:condition_equality}
\end{equation}
Since $\phib(\cdot)$ is assumed to be made up of $L$ linearly independent functions of $\mathcal{L}_{2}(\mathcal{S}_{\mathcal{X},\Theta})$, the condition for equality \eqref{eqn:condition_equality} is equivalent to
\begin{equation}
    \Rb_{\gb\phib}\Qb_{\phib}^{-1} = \Rb_{\gb\phib\given\xb}\Qb_{\phib\given\xb}^{-1}
    \label{eqn:condition_equality_bis}
\end{equation}
for a.e. $\xb\in\mathcal{S}_{\mathcal{X}}$, which means that $\Rb_{\gb\phib\given\xb}\Qb_{\phib\given\xb}^{-1}$ does actually not depend on $\xb$.

For instance, such a situation occurs in the particular case where $\phib(\xb,\thetab)=\gb(\thetab)-\Expct{\thetab|\xb}{\gb(\thetab)}$, which yields $\Qb_{\phib\given\xb}=\Rb_{\gb\phib\given\xb}$ as well as $\Qb_{\phib}=\Rb_{\gb\phib}=\mathbf{MSE}(\widehat{\gb}^{\mathsf{MMSE}})$. Then, \eqref{eqn:tighter_LB_MSE_WWF} simply reduces to
\begin{equation}
    \mathbf{MSE}(\widehat{\gb})\succeq\mathbf{MSE}(\widehat{\gb}^{\mathsf{MMSE}}).
\end{equation}
As shown in the following, the choice $\phib(\xb,\thetab)=\gb(\thetab)-\Expct{\thetab|\xb}{\gb(\thetab)}$ is not the only one that leads to \eqref{eqn:condition_equality_bis}.

\section{Tighter Bayesian Cramér-Rao bounds and an update for the notion of efficiency} \label{sec:TBCRB}

\subsection{Classical and tighter Bayesian Cramér-Rao bounds}\label{ssec:BCRBs}

Let us consider the family of $M=K$ functions
$\phib^{\mathrm{CR}}(\cdot)=
(\phi_{1}^{\mathrm{CR}}(\cdot),\ldots,\phi_{K}^{\mathrm{CR}}(\cdot))\trsps$ defined as
\begin{equation}
    \phib^{\mathrm{CR}}(\xb,\thetab)\eqdef
    \left\{
    \begin{IEEEeqnarraybox}[
        \IEEEeqnarraystrutmode
        \IEEEeqnarraystrutsizeadd{7pt}{0pt}
    ][c]{c/l}
    \frac{\partial \ln p(\thetab\given\xb)}{\partial\thetab}, &\text{if } (\xb,\thetab)\in\mathcal{S}_{\mathcal{X},\Theta}, \\
    0, & \text{otherwise}.
    \end{IEEEeqnarraybox}
    \right.
\label{eqn:def_phiCR}
\end{equation}
Sufficient conditions for $\phib^{\mathrm{CR}}(\cdot)$ to generate BLBs of the WWF are \cite{WW88}
\begin{IEEEitemize}
    \item $\mathcal{S}_{\Theta\given\xb}=\mathcal{I}_{\xb,1}\times\ldots\times\mathcal{I}_{\xb,K}$, where $\mathcal{I}_{\xb,k}=\intervaloo{a_{\xb,k}}{b_{\xb,k}}$ are intervals of $\Rbb$ with endpoints $a_{\xb,k},b_{\xb,k}\in\intervalcc{-\infty}{+\infty}$, $a_{\xb,k}<b_{\xb,k}$, $k=1,\ldots,K$;
    
    \item $p(\thetab\given\xb)$ is absolutely continuous w.r.t. $\theta_{k}$, $k=1,\ldots,K$, for a.e. $\xb\in\mathcal{S}_{\mathcal{X}}$;
    
    \item $\lim_{\theta_{k}\rightarrow a_{\xb,k}} g_{\ell}(\thetab)\,p(\thetab\given\xb)=\lim_{\theta_{k}\rightarrow b_{\xb,k}} g_{\ell}(\thetab)\,p(\thetab\given\xb)=0$, $\ell=1,\ldots,L$, $k=1,\ldots,K$, for a.e. $\xb\in\mathcal{S}_{\mathcal{X}}$;
    
    \item The (joint) Bayesian information matrix $\Qb_{\phib^{\mathrm{CR}}}$ is non-singular.
\end{IEEEitemize}
After plugging \eqref{eqn:def_phiCR} into \eqref{eqn:def_Rgphi} and \eqref{eqn:def_Qphi}, we obtain the classical version of the Bayesian Cramér-Rao bound (BCRB), that is
\begin{equation}
    \mathbf{BCRB}=
    \Rb_{\gb\phib^{\mathrm{CR}}}
    \Qb_{\phib^{\mathrm{CR}}}^{-1}
    \Rb_{\gb\phib^{\mathrm{CR}}}\trsps,
    \label{eqn:BCRB_general}
\end{equation}
with
\begin{IEEEeqnarray}{;l} 
    \Rb_{\gb\phib^{\mathrm{CR}}}=
    \Expct*{\xb,\thetab}{
        \gb(\thetab)\frac{\partial \ln p(\thetab\given\xb)}{\partial\thetab\trsps}
    }
    =-\Expct*{\xb,\thetab}{
        \frac{\partial\gb(\thetab)}{\partial\thetab\trsps}
    },
    \label{eqn:Rgphi_CR}
\end{IEEEeqnarray}
and
\begin{equation}
    \Qb_{\phib^{\mathrm{CR}}}=
    \Expct*{\xb,\thetab}{
        \frac{\partial \ln p(\thetab\given\xb)}{\partial\thetab}
        \frac{\partial \ln p(\thetab\given\xb)}{\partial\thetab\trsps}
    }.
\label{eqn:Qphi_CR}
\end{equation}
Similarly, the tighter version of the BCRB is obtained by plugging \eqref{eqn:def_phiCR} into \eqref{eqn:def_Rgphi_x} and \eqref{eqn:def_Qphi_x}, which gives
\begin{equation}
    \mathbf{TBCRB}=
    \Expct[\big]{\xb}{
        \Rb_{\gb\phib^{\mathrm{CR}}\given\xb}
        \Qb_{\phib^{\mathrm{CR}}\given\xb}^{-1}
        \Rb_{\gb\phib^{\mathrm{CR}}\given\xb}\trsps
    },
\label{eqn:TBCRB_general}
\end{equation}
with
\begin{IEEEeqnarray}{-l}
    \Rb_{\gb\phib^{\mathrm{CR}}\given\xb}=
    \Expct*{\thetab\given\xb}{
        \gb(\thetab)\frac{\partial \ln p(\thetab\given\xb)}{\partial\thetab\trsps}
    }
    =-\Expct*{\thetab\given\xb}{
        \frac{\partial\gb(\thetab)}{\partial\thetab\trsps}
    }\!,
    \label{eqn:Rgphi_x_CR}
\end{IEEEeqnarray}
and
\begin{equation}
    \Qb_{\phib^{\mathrm{CR}}\given\xb}=
    \Expct*{\thetab\given\xb}{
        \frac{\partial \ln p(\thetab\given\xb)}{\partial\thetab}
        \frac{\partial \ln p(\thetab\given\xb)}{\partial\thetab\trsps}
    },
\label{eqn:Qphi_x_CR}
\end{equation}
that can be referred to as the posterior (or conditional) Bayesian information matrix. Finally, \eqref{eqn:tighter_LB_MSE_WWF} can be rewritten as
\begin{equation}
    \mathbf{MSE}(\widehat{\gb})\succeq\mathbf{MSE}(\widehat{\gb}^{\mathsf{MMSE}})\succeq\mathbf{TBCRB}\succeq\mathbf{BCRB}.
    \label{eqn:LB_MSE_BCRBs}
\end{equation}

We can remark that an alternative way of obtaining these two versions of the BCRB consists in starting from the functions $\phi_{\hb_{1}}^{\mathrm{BZ}}(\cdot),\ldots,\phi_{\hb_{M}}^{\mathrm{BZ}}(\cdot)$ generating the Bobrovsky-Zakaï bound \cite{BZ76}, which are defined, for $\Hb=[\hb_{1},\ldots,\hb_{M}]\in\Rbb^{K\times M}$ and $m=1,\ldots,M$, as
\begin{IEEEeqnarray*}{-l}
\phi_{\hb_{m}}^{\mathrm{BZ}}(\xb,\thetab)= \\
\quad \frac{
    p(\thetab+\hb_{m}\given\xb)\,\indicf{\mathcal{S}_{\Theta\given\xb}}{\thetab+\hb_{m}} 
    -p(\thetab\given\xb)\,\indicf{\mathcal{S}_{\Theta\given\xb}}{\thetab-\hb_{m}}
}{
    p(\thetab\given\xb)\,\indicf{\mathcal{S}_{\Theta\given\xb}}{\thetab}
},\\[-1.5ex]
\IEEEyesnumber\label{eqn:def_phiBZ}
\end{IEEEeqnarray*}%
if $(\xb,\thetab) \in \mathcal{S}_{\mathcal{X},\Theta }$, and $\phi_{\hb_{m}}^{\mathrm{BZ}}(\xb,\thetab) =0$ otherwise, and by letting $\hb_{m}$ tend to $\bm{0}$, $m=1,\ldots,M$. In particular, this allows for deriving conditions under which the BCRB, and thus the TBCRB, are nonzero (see \cite{CRE17,CF18} for instance).

\subsection{An update for the notion of efficiency}

Since the two BCRBs in \eqref{eqn:LB_MSE_BCRBs} are nonzero under the same conditions (see \cite[Section~III]{CRE17}), it seems appropriate to define the notion of efficiency as follows.
\begin{definition}\label{def:efficiency}
    An estimator $\widehat{\gb}(\cdot)$ of $\gb(\thetab)$ is said to be efficient if its MSE achieves the TBCRB. In other terms, $\mathbf{MSE}(\widehat{\gb})=\mathbf{TBCRB}$ in \eqref{eqn:LB_MSE_BCRBs}.
\end{definition}

Indeed, it has often been noticed that the classical BCRB cannot be attained for plenty of estimation problems. From \eqref{eqn:LB_MSE_BCRBs} and Definition \ref{def:efficiency}, it can be seen that if an estimator, different from the MMSE estimator, is efficient, then so is the MMSE estimator. Let us now investigate conditions for the TBCRB to be attained, hence allowing for a description of models for which efficient estimators (potentially different from the MMSE estimator) could be found. 

\subsection{Class of efficient estimators}

Rewriting \eqref{eqn:pythagoras2} with $\gb(\xb,\thetab)\eqdef\gb(\thetab)$ and $\zetab(\xb,\thetab)\eqdef\widehat{\gb}(\xb)$, 
we have, for any $\ab\in\Rbb^{L}$, $\phib(\cdot)\in\mathcal{H}_{\phib}$, and a.e. $\xb\in\mathcal{X}$,
\begin{IEEEeqnarray*}{-rCl}
    \norm{
        \ab\trsps(\gb(\thetab)-\widehat{\gb}(\xb))
    }_{|\xb}^{2}
    &=&\norm{
        \ab\trsps\varPib_{\mathcal{S}_{\phib|\xb}}(\gb)(\xb,\thetab)
    }_{|\xb}^{2}\\
    &&
    \hspace{-2.5em}
    {+}\>\norm[\big]{
        \ab\trsps\bigl[
            \varPib_{\mathcal{S}_{\phib|\xb}^{\perp}}(\gb)(\xb,\thetab)
            -\widehat{\gb}(\xb)
        \bigr]
    }_{|\xb}^{2}.
\IEEEyesnumber\label{eqn:pythagoras2bis}
\end{IEEEeqnarray*}
Consequently, the estimator $\widehat{\gb}(\cdot)$ is efficient iff
$\norm{
    \ab\trsps(\gb(\thetab)-\widehat{\gb}(\xb))
}_{|\xb}^{2}
=\norm{
    \ab\trsps\varPib_{\mathcal{S}_{\phib|\xb}}(\gb)(\xb,\thetab)
}_{|\xb}^{2}$ for any $\ab\in\Rbb^{L}$ and a.e. $\xb\in\mathcal{X}$, that is, from \eqref{eqn:pythagoras2bis}, iff
\begin{equation}
    \norm[\big]{
        \ab\trsps\bigl[
        \varPib_{\mathcal{S}_{\phib|\xb}^{\perp}}(\gb)(\xb,\thetab)
        -\widehat{\gb}(\xb)
        \bigr]
    }_{|\xb}^{2}=0
\end{equation}
for all $\ab\in\Rbb^{L}$, i.e., iff
\begin{equation}
    \varPib_{\mathcal{S}_{\phib|\xb}^{\perp}}(\gb)(\xb,\thetab)
    =\widehat{\gb}(\xb).
    \label{eqn:condition_efficiency}
\end{equation}
Since $\gb(\thetab)=\varPib_{\mathcal{S}_{\phib|\xb}}(\gb)(\xb,\thetab)+\varPib_{\mathcal{S}_{\phib|\xb}^{\perp}}(\gb)(\xb,\thetab)$, \eqref{eqn:condition_efficiency} is finally equivalent to
\begin{equation}
    \gb(\thetab)-\widehat{\gb}(\xb)=\varPib_{\mathcal{S}_{\phib|\xb}}(\gb)(\xb,\thetab)= \Rb_{\gb\phib|\xb}\Qb_{\phib|\xb}^{-1}\phib(\xb,\thetab).
    \label{eqn:condition_efficiency_bis}
\end{equation}

In the case of the TBCRB, i.e., $\phib(\cdot)=\phib^{\mathrm{CR}}(\cdot)$, the condition \eqref{eqn:condition_efficiency_bis} becomes, from \eqref{eqn:def_Rgphi_x} and \eqref{eqn:def_Qphi_x},
\begin{IEEEeqnarray*}{-l}
    \widehat{\gb}(\xb)-\gb(\thetab)= \\
    \Expct*{\thetab\given\xb}{
        \frac{\partial\gb(\thetab)}{\partial\thetab\trsps}
    }\!
    \Expct*{\thetab\given\xb}{
        \frac{\partial \ln p(\thetab\given\xb)}{\partial\thetab}
        \frac{\partial \ln p(\thetab\given\xb)}{\partial\thetab\trsps}
    }^{-1}\!
    \frac{\partial \ln p(\thetab\given\xb)}{\partial\thetab}.\\[-1ex]
    \IEEEyesnumber\label{eqn:condition_efficiency_ter}
\end{IEEEeqnarray*}
From this condition, a number of particular cases can be examined.

\subsubsection{Scalar case ($K=L=1$)}\label{sssec:scalar_case}

In the scalar case, the function $\gb(\thetab)\eqdef g(\theta)$ reduces to a single function of a scalar parameter $\theta\in\Theta\subset\Rbb$. Then, \eqref{eqn:condition_efficiency_ter} reduces to
\begin{equation}
    \widehat{g}(\xb)-g(\theta)=
    v(\xb)
    \frac{\partial \ln p(\theta\given\xb)}{\partial\theta},
    \label{eqn:diff_eq}
\end{equation}
where
\begin{equation}
    v(\xb) = \frac{
        \Expct{\theta|\xb}{
            \frac{\diff g(\theta)}{\diff\theta}
        }
    }{
        \Expct*{\theta|\xb}{
            \bigl(
            \frac{\partial \ln p(\theta\given\xb)}{\partial\theta}
            \bigr)^{2}
        }
    }.
    \label{eqn:expr_v}
\end{equation}
Integrating \eqref{eqn:diff_eq} w.r.t. $\theta$ leads to
\begin{equation}
    p(\theta\given\xb) = b(\xb)\exp\biggl[
        \frac{\theta\widehat{g}(\xb)-G(\theta)}{v(\xb)}
    \biggr],
    \label{eqn:efficient_posterior}
\end{equation}
in which
\begin{equation}
    b(\xb) =\biggl(
        \int_{\mathcal{S}_{\Theta|\xb}}
        \exp{\left[
        \frac{\theta\widehat{g}(\xb)-G(\theta)}{v(\xb)}\right]}\diff\theta
    \biggr)^{-1},
\end{equation}
and $G(\theta)$ is the antiderivative of $g(\theta)$, i.e.,
$\diff G(\theta)/\diff\theta = g(\theta)$. It is worth noticing that the posterior distribution $p(\theta\given\xb)$ in \eqref{eqn:efficient_posterior} is of the form
\begin{equation}
    f_{\theta}(\xb) = h(\xb)\exp\bigl[
        \etab(\theta)\trsps \tb(\xb) - A(\theta)
    \bigr]
    \label{eqn:exponential_family}
\end{equation}
with $h(\xb)=b(\xb)$, $\etab(\theta)=(\theta,G(\theta))\trsps$, 
$\tb(\xb)=(\widehat{g}(\xb)/v(\xb),-1/v(\xb))\trsps$, and $A(\theta)=0$, meaning that $p(\theta\given\xb)$ is an exponential family, which turns out to be curved, due to the nonlinear link between the components of $\tb(\xb)$ \cite[pp.~23--32]{LC03}. Moreover, the form of the posterior distribution \eqref{eqn:efficient_posterior} leads to the following form of the joint distribution $p(\xb,\theta)$:
\begin{equation}
     p(\xb,\theta) = b(\xb)\,p(\xb)\,\exp\biggl[
    \frac{\theta\widehat{g}(\xb)-G(\theta)}{v(\xb)}
    \biggr].
    \label{eqn:joint_exp}
\end{equation}
Hence, the class of models for which an efficient estimator can be found are those in which the joint distribution can be written in the form \eqref{eqn:joint_exp}.

\subsubsection{Exponential family models}

In the scalar case, studied in the previous section, 
numerous statistical models are concerned with a likelihood function that belongs to a general exponential family \eqref{eqn:exponential_family}, i.e., there exists functions $h(\xb)\geq 0$, $\etab(\theta),\tb(\xb)\in\Rbb^{J}$ and $A(\theta)$ such that
\begin{equation}
p(\xb\given\theta) = h(\xb)\exp\bigl[
\etab(\theta)\trsps \tb(\xb) - A(\theta)
\bigr],
\label{eqn:exponential_L}
\end{equation}
with 
$A(\theta)=\ln
\int_{\mathcal{S}_{\mathcal{X}|\theta}}
h(\xb)\exp\bigl[\etab(\theta)\trsps \tb(\xb)\bigr]
\diff\xb$. In the Bayesian framework, assigning a conjugate distribution to the prior w.r.t. the likelihood function ensures that the posterior p.d.f. is in the same probability distribution family, and often makes it more easily tractable. For a likelihood function of the form in \eqref{eqn:exponential_L}, the conjugate prior has the form
\begin{equation}
p(\theta)=p(\theta\semcol\lambda,\mub)=
\kappa(\lambda,\mub)\exp\bigl[
\etab(\theta)\trsps\mub-\lambda A(\theta)
\bigr],
\label{eqn:exponential_prior}
\end{equation}
with $\kappa(\lambda,\mub)=1/\int_{\mathcal{S}_{\Theta}}\exp[
\etab(\theta)\trsps\mub-\lambda A(\theta)
]\diff\theta$, i.e., the form of an exponential family as well. This leads to the following joint p.d.f.:
\begin{IEEEeqnarray*}{rCl}
p(\xb,\theta)&=&p(\xb,\theta\semcol\lambda,\mub) \\
&=&p(\xb\given\theta)\,p(\theta\semcol\lambda,\mub)\\
&=&\kappa(\lambda,\mub)\,h(\xb)\\
& &\times\exp\bigl[
\etab(\theta)\trsps(\mub+\tb(\xb))
-(\lambda+1)A(\theta)
\bigr],
\IEEEyesnumber\label{eqn:exponential_joint}%
\end{IEEEeqnarray*}
and to the following marginal distribution of the observations:
\begin{IEEEeqnarray*}{'rCl}
p(\xb)&=&p(\xb\semcol\lambda,\mub)
=\int_{\mathcal{S}_{\Theta\given\xb}}\!\!\!p(\xb,\theta\semcol\lambda,\mub)\diff\theta \\
&=& \kappa(\lambda,\mub)\,h(\xb)\\
&&\times\int_{\mathcal{S}_{\Theta\given\xb}}\hspace{-1em}
\exp\bigl[
\etab(\theta)\trsps(\mub+\tb(\xb))-(\lambda+1)A(\theta)
\bigr]
\diff\theta\\
&=&\frac{\kappa(\lambda,\mub)\,h(\xb)}{\kappa(\lambda+1,\mub+\tb(\xb))}.
\IEEEyesnumber\label{eqn:exponential_marginal}
\end{IEEEeqnarray*}
Finally, the \textit{a posteriori} distribution can be obtained as
\begin{IEEEeqnarray*}{-rCl}
p(\theta\given\xb\semcol\lambda,\mub)
&=& \frac{p(\xb,\theta\semcol\lambda,\mub)}{p(\xb\semcol\lambda,\mub)}\\
&=& \kappa(\lambda+1,\mub+\tb(\xb))\\
&& \times\exp\bigl[
\etab(\theta)\trsps(\mub+\tb(\xb))
-(\lambda+1)A(\theta)
\bigr]\!,
\IEEEyesnumber\label{eqn:exponential_posterior}%
\end{IEEEeqnarray*}
which has, as expected, the same form as the prior distribution \eqref{eqn:exponential_prior}, i.e.,
$
p(\theta\given\xb\semcol\lambda,\mub) = 
p(\theta\semcol\lambda+1,\mub+\tb(\xb)).
$
Yet, for the TBCRB to be attained, it is necessary that \eqref{eqn:exponential_posterior} has the form \eqref{eqn:efficient_posterior}, i.e., iff
\begin{equation}
\etab(\theta)\trsps(\mub+\tb(\xb))-(\lambda+1)A(\theta)
= \frac{\theta\widehat{g}(\xb)-G(\theta)}{v(\xb)},
\label{eqn:condition_efficiency_exponential}
\end{equation}
which automatically implies $\kappa(\lambda+1\semcol\mub+\tb(\xb))=b(\xb)$. Condition for efficiency \eqref{eqn:condition_efficiency_exponential} may allow for derivation of efficient estimators. For instance, let us consider the particular case where $v(\xb)$ does actually not depend upon $\xb$. Without loss of generality, let us assume $v(\xb)=1$. Then, \eqref{eqn:condition_efficiency_exponential} can be rewritten as
\begin{equation}
G(\theta)+\etab(\theta)\trsps\mub-(\lambda+1)A(\theta)
=\theta\widehat{g}(\xb)-\etab(\theta)\trsps\tb(\xb).
\label{eqn:condition_efficiency_exponential_bis}
\end{equation}
By differentiating both sides w.r.t. $\xb$, it appears that $\etab(\theta)=\cb\theta$, where $\cb$ is a constant vector, i.e., $\theta$ is a natural parameter for the likelihood function \eqref{eqn:exponential_L} and the prior p.d.f. \eqref{eqn:exponential_prior}. After differentiating both sides of \eqref{eqn:condition_efficiency_exponential_bis} w.r.t. $\theta$, we can deduce that 
\begin{subnumcases}{}
g(\theta) = (\lambda+1)\frac{\diff A(\theta)}{\diff\theta}+C,
\label{eqn:suff_condition_efficiency_g}\\
\widehat{g}(\xb) = \cb\trsps\tb(\xb)+C'
\label{eqn:suff_condition_efficiency_ghat}
\end{subnumcases}
where $C$ and $C'$ denote some scalar constants. Finally, in the case of signals modeled by \eqref{eqn:exponential_L} and \eqref{eqn:exponential_prior}, the estimator given by \eqref{eqn:suff_condition_efficiency_ghat} provides an efficient estimate of $g(\theta)$ given by \eqref{eqn:suff_condition_efficiency_g}, with an associated MSE equal to 1. Since $v(\xb)$ has been assumed constant w.r.t. $\xb$, it can be noticed that the related TBCRB and BCRB are equal. However, as shown in the next section, efficient estimation is also possible when $v(\xb)$ does depend upon $\xb$, and in that case the TBCRB differs from the BCRB (see Section~\ref{ssec:equality_BCRBS}).

\subsubsection{Particular case: $g(\theta)=\theta$}

A case of particular interest is that where $g(\theta)\eqdef\theta$. Then, \eqref{eqn:expr_v} reduces to
\begin{equation}
    v(\xb)=\frac{1}{\Expct*{\theta|\xb}{
            \bigl(
                \frac{\partial\ln p(\theta\given\xb)}{\partial\theta}
            \bigr)^{2}}
        },
    \label{eqn:posterior_variance}
\end{equation}
and \eqref{eqn:efficient_posterior} becomes
\begin{IEEEeqnarray*}{rCl}
    p(\theta\given\xb)&=&b(\xb)\exp\biggl[
        \frac{
            \theta\widehat{\theta}(\xb)-\frac{\theta^{2}}{2}+c
        }{
            v(\xb)
        }
    \biggr]\\
    &=& b(\xb)\exp\biggl[
        \frac{2c+\widehat{\theta}(\xb)^{2}}{2v(\xb)}
    \biggr]
    \exp\biggl[
        -\frac{(\theta-\widehat{\theta}(\xb))^{2}}{2v(\xb)}
    \biggr]\\
    &=&\frac{1}{\sqrt{2\pi v(\xb)}}\exp\biggl[
    -\frac{(\theta-\widehat{\theta}(\xb))^{2}}{2v(\xb)}
    \biggr], \IEEEyesnumber\label{eqn:posterior_gauss}
\end{IEEEeqnarray*}
where $c$ denotes a scalar constant, and the last line is obtained from the condition 
$\int_{\mathcal{S}_{\Theta|\xb}} p(\theta\given\xb)\diff\theta=1$.
Consequently, it is possible to find an efficient estimate of $\theta$ only if the posterior p.d.f. $p(\theta\given\xb)$ is Gaussian for a.e. $\xb\in\mathcal{X}$. It is important to notice that, in that case, the posterior variance $\sigma_{|\xb}^{2}\eqdef v(\xb)$ may depend upon $\xb$. This result differs from and relaxes Van Trees' as for the attainment of the standard BCRB \cite[p.73]{VT68}, where the posterior variance cannot depend upon $\xb$, since the conditional expectation $\Expct{\theta|\xb}{\cdot}$ in \eqref{eqn:posterior_variance} is replaced with the joint expectation $\Expct{\xb,\theta}{\cdot}$. This remark is in agreement with that made in Section \ref{ssec:equality_BCRBS}, that is, if the posterior variance $\sigma_{|\xb}^{2}$ does depend upon $\xb$, then $\mathrm{TBCRB}>\mathrm{BCRB}$. However, it can also be noted that, if the posterior p.d.f. is not Gaussian, then $\mathrm{MSE}(\widehat{\theta})>\mathrm{TBCRB}\geq\mathrm{BCRB}$, i.e., neither the TBCRB, nor the BCRB, are tight BLBs.

In order to illustrate these points, we study a practical example in the next section.

\section{Study of a noteworthy example} \label{sec:example}

Let us consider the same problem as in \cite[p.7]{VTB07}, of estimating the variance of a Gaussian random variable with known mean (or equivalently assumed zero), from an observation vector $\xb$ consisting of $N$ i.i.d. samples $x_{1},\ldots,x_{N}$, i.e., $\xb\sim\mathcal{N}(0,\sigma^{2}\Ib)$. The parameter of interest is $\theta\eqdef\sigma^{2}$, so the likelihood function can be written as
\begin{equation}
p(\xb\given\theta) = (2\pi)^{-\frac{N}{2}}\,\theta^{-\frac{N}{2}}\,e^{-\frac{\xb\trsps\xb}{2\theta}}.
\label{eqn:L}
\end{equation}
We assume $\theta$ a priori follows a beta distribution, i.e., for $0<\theta\leq 1$,
\begin{equation}
p(\theta\semcol a)=B(a,a)^{-1}\,\theta^{a-1}(1-\theta)^{a-1},
\label{eqn:prior}
\end{equation}
where $B(a,a)\eqdef\int_{0}^{1}\theta^{a}\,(1-\theta)^{1-a}\diff\theta=\varGamma^{2}(a)/\varGamma(2a)$, and $\varGamma(\cdot)$ denotes the gamma function: $\varGamma(a)\eqdef\int_{0}^{\infty}t^{a-1}\,e^{-t}\diff t$. This prior distribution is symmetric, with mean $\mu_{\pi}=1/2$ and variance $\sigma_{\pi}^{2}=1/4(2a+1)$. For $a=1$, it reduces to a uniform distribution on $\intervalcc{0}{1}$. As $a$ increases, it becomes narrower, and when $a\rightarrow+\infty$, we approach the case of $\theta$ known. Multiplying \eqref{eqn:L} by \eqref{eqn:prior} yields the joint distribution, for $\xb\in\Rbb^{N}$ and $0\leq\theta\leq 1$, as
\begin{equation}
p(\xb,\theta\semcol a)=(2\pi)^{-\frac{N}{2}}\,B(a,a)^{-1}\,\theta^{a-\frac{N}{2}-1}(1-\theta)^{a-1}\,e^{-\frac{\xb\trsps\xb}{2\theta}}.
\label{eqn:joint_pdf}
\end{equation}
In the following, let us recall some known results on the BCRB, the expected CRB (ECRB), and the maximum a posteriori (MAP) estimator.

\subsection{BCRB, MAP estimator and ECRB (known results)} \label{ssec:exple_known_results}

Since $\partial\ln p(\theta\given\xb)/\partial\theta=\partial\ln p(\xb,\theta)/\partial\theta$, the Bayesian Fisher information can be obtained from the joint p.d.f. \eqref{eqn:joint_pdf}~as
\begin{IEEEeqnarray*}{rCl}
    F_{\mathrm{B}}&=&\Expct*{\xb,\theta}{\Bigl(
            \frac{\partial\ln p(\xb,\theta)}{\partial\theta}
        \Bigr)^{2}}
    =-\Expct*{\xb,\theta}{
            \frac{\partial^{2}\ln p(\xb,\theta)}{\partial\theta^{2}}
        } \\
    &=& \Expct[\bigg]{\xb,\theta}{
            \frac{a-1-\frac{N}{2}}{\theta^{2}}+\frac{\xb\trsps\xb}{\theta^{3}}+\frac{a-1}{(1-\theta)^{2}}
        }. \IEEEyesnumber\label{eqn:def_BFIM}
\end{IEEEeqnarray*}
After computing the expectation, we obtain
\begin{equation}
    F_{\mathrm{B}}=(N+4(a-1))
      \frac{\varGamma(a-2)\,\varGamma(2a)}{2\,\varGamma(2a-2)\,\varGamma(a)},
      \label{eqn:BFIM}
\end{equation}
which reduces, for $a>2$, to
\begin{equation}
    F_{\mathrm{B}}=\frac{(N+4(a-1))(2a-1)}{a-2},
    \label{eqn:BFIM_asup2}
\end{equation}
and the BCRB is simply given by
\begin{equation}
    \mathrm{BCRB}=\frac{1}{F_{\mathrm{B}}}.
    \label{eqn:BCRB}
\end{equation}

The MAP estimator, defined as
\begin{equation}
    \widehat{\theta}^{\mathsf{MAP}}(\xb)\eqdef\argmax_{0\leq\theta\leq 1} p(\theta\given\xb)=\argmax_{0\leq\theta\leq 1} p(\xb,\theta),
\end{equation}
is shown to be equal to \cite[p.9]{VTB07}
\begin{subnumcases}{\hspace{-1em}\widehat{\theta}^{\mathsf{MAP}}(\xb)=}
    \frac{\beta-\sqrt{\beta^{2}-4\alpha\gamma}}{2\alpha},
    &\hspace{-1em}if $N\neq 4(a-1)$, 
    \label{eqn:MAPE_N<>4(a-1)}\\
    \frac{\gamma}{\frac{1}{2}+\gamma},
    &\hspace{-1em}if $N=4(a-1)$,
    \label{eqn:MAPE_N=4(a-1)}
\end{subnumcases}
where $\alpha=1-4(a-1)/N$, $\beta=1-2(a-1)/N+\gamma$, and $\gamma=\xb\trsps\xb/N$.

An approximation for the asymptotic MSE of $\widehat{\theta}^{\mathsf{MAP}}$ can be obtained by computing the ECRB, defined by
\begin{equation}
    \mathrm{ECRB}=\Expct{\theta}{\mathrm{CRB}(\theta)}
    =\Expct{\theta}{F^{-1}(\theta)},
\end{equation}
with $\mathrm{CRB}(\theta)$ denoting the classical Cramér-Rao bound, that is the inverse of the classical Fisher information for the parameter $\theta$:
\begin{equation}
    \mathrm{CRB}(\theta)=\frac{1}{F(\theta)}
    =-\frac{1}{\Expct[\big]{\xb|\theta}{
            \frac{\partial^{2}\ln p(\xb\given\theta)}{\partial\theta^{2}}
        }}.
\end{equation}
For the example under study, the ECRB is given by \cite[p.11]{VTB07}
\begin{equation}
    \mathrm{ECRB}=\frac{2}{N}\Expct{\theta}{\theta^{2}}
    =\frac{1}{N}\frac{a+1}{2a+1}.
    \label{eqn:ECRB_exple}
\end{equation}
It should be noted that, even if the ECRB approximates the MSE of the MAP estimator in the asymptotic regime, it is \emph{not} a BLB on the MSE, and thus cannot provide any insight about estimation performance in other regimes.

\subsection{Expressions of $p(\xb)$, $p(\theta\given\xb)$, $\mathrm{E}_{\theta|\xb}[\theta]$ and MMSE}

In order to proceed with further results, the following relation is essential: according to \cite[3.471(2.)]{GR94},
$\forall \lambda,\mu,\nu \in\Cbb,\mathrm{Re}(\lambda) >0, \mathrm{Re}(\mu) >0$,
\begin{equation}
    \int_{0}^{1}\!\!\theta^{\nu-1}(1-\theta) ^{\mu -1}e^{-%
        \frac{\lambda }{\theta }}\diff\theta =\lambda ^{\frac{\nu -1}{2}}e^{-\frac{\lambda }{2}}\varGamma(\mu )\, W_{\frac{1-2\mu -\nu }{2},\frac{\nu }{2}}(
    \lambda) ,  \label{eqn:GR_identity}
\end{equation}%
where $W_{\mu ,\nu }\left( \cdot\right) $ denotes a Whittaker function \cite[\S9.22--9.23]{GR94}. By setting $\xi=\frac{N-6a+2}{4}$, and applying (\ref{eqn:GR_identity}) to
\begin{equation}
    p(\xb) =\left( 2\pi \right) ^{-\frac{N}{2}}B(a,a) ^{-1}\!\!\int_{0}^{1}\!\!\theta ^{a-\frac{N}{2}-1}(1-\theta) ^{a-1}e^{-\frac{1}{\theta }\frac{\xb\trsps%
            \xb}{2}}\diff\theta ,
\end{equation}%
we obtain the following closed-form expressions for the marginal and the posterior distributions:
\begin{IEEEeqnarray*}{-rCl}
    p\left( \xb\right) &=& \frac{\varGamma(a) \bigl(\frac{\xb\trsps\xb}{2}%
        \bigr) ^{a-\frac{N}{2}-1}e^{-\frac{\xb\trsps\xb}{4}}\,
        W_{\xi,\frac{2a-N}{4}}\bigl( \frac{\xb\trsps\xb}{2}\bigr)}{%
        \sqrt{2\pi }^{N}B(a,a) }, 
    \IEEEyesnumber\IEEEyessubnumber*\label{eqn:marginal_pdf}
    \\
    p\left( \theta \given\xb\right) &=&
    \frac{\theta ^{a-\frac{N}{2}-1}(1-\theta)^{a-1}\,e^{-\frac{1}{%
                \theta }\frac{\xb\trsps\xb}{2}}}{\varGamma(a)\,\bigl(
        \frac{\xb\trsps\xb}{2}\bigr) ^{\frac{2a-N-2}{4}}e^{-%
            \frac{\xb\trsps\xb}{4}}\,W_{\xi,\frac{2a-%
                N}{4}}\bigl( \frac{\xb\trsps\xb}{2}\bigr) }. 
    \label{eqn:posterior_pdf}
\end{IEEEeqnarray*}%
Similarly, the MMSE estimator can be obtained as
\begin{IEEEeqnarray*}{rCl}
    \widehat{\theta}^{\mathsf{MMSE}}(\xb)&=&\Expct{\theta|\xb}{\theta}
    =\int_{0}^{1}\!\!\theta\,p( \theta \given\xb) \diff\theta 
    =\frac{\int_{0}^{1}\theta\, p( \xb,\theta ) \diff\theta }{p(\xb)} \\
    &=&\frac{
            \int_{0}^{1}
            \theta^{a-\frac{N}{2}}
            (1-\theta)^{a-1}\,
            e^{-\frac{1}{\theta}\frac{\xb\trsps\xb}{2}}\diff\theta
        }{
            p(\xb)\sqrt{2\pi}^{N}B(a,a)
        }\\
    &=&\sqrt{\frac{\xb\trsps\xb}{2}}\,
       \frac{
          W_{\xi-\frac{1}{2},\frac{2a-N+2}{4}}\bigl( \frac{\xb\trsps\xb}{2}\bigr)
       }{
          W_{\xi,\frac{2a-N}{4}}\bigl( \frac{\xb\trsps\xb}{2}\bigr) 
       },
    \IEEEyesnumber \label{eqn:MMSEE}
\end{IEEEeqnarray*}%
as well as the MMSE itself, as
\begin{equation}
    \mathrm{MMSE}=\Expct[\big]{\xb,\theta }{
        (\Expct{\theta|\xb}{\theta} -\theta)^{2}
    }
    =\Expct*{\xb}{
        \Expct{\theta|\xb}{\theta^{2}}-\Expct{\theta|\xb}{\theta}^{2}
    }, \label{eqn:MMSE_v0}
\end{equation}%
where, using the same type of derivation as for \eqref{eqn:MMSEE}, we have
\begin{equation}
    \Expct{\theta|\xb}{\theta^{2}} =
    \frac{\int_{0}^{1}\!\theta^{2}\, p( \xb,\theta ) \diff\theta}{p(\xb)}
    = \frac{\xb\trsps\xb}{2}\frac{W_{\xi-1,%
            \frac{2a-N}{4}+1}\bigl( \frac{\xb\trsps\xb}{2}\bigr) }{%
        W_{\xi,\frac{2a-N}{4}}\bigl( \frac{\xb%
            \trsps\xb}{2}\bigr) }, \label{eqn:E_theta2}
\end{equation}%
hence giving
\begin{IEEEeqnarray*}{'l}
    \mathrm{MMSE}=
    \ExpctE_{\xb}\Biggl[ \frac{\xb\trsps\xb}{2}\Biggl(
        \frac{
            W_{\xi-1,\frac{2a-N}{4}+1}\bigl(\frac{\xb\trsps\xb}{2}\bigr)
        }{
            W_{\xi,\frac{2a-N}{4}}\bigl( \frac{\xb\trsps\xb}{2}\bigr)
        }\rule{40pt}{0pt}\\
    \IEEEeqnarraymulticol{1}{r}{
        -\frac{
            W_{\xi-\frac{1}{2},\frac{2a-N+2}{4}}^{2}\bigl(
                \frac{\xb\trsps\xb}{2}
            \bigr)
        }{
            W_{\xi,\frac{2a-N}{4}}^{2}\bigl(\frac{\xb\trsps\xb}{2}\bigr)
        }%
        \Biggr) \Biggr] .
    }
    \IEEEyesnumber\label{eqn:MMSE_v1}%
\end{IEEEeqnarray*}%
Finally, by using the change of variable $t=\frac{\xb\trsps\xb}{2}$, we obtain
\begin{IEEEeqnarray*}{-l}
    \mathrm{MMSE}=
    \frac{\varGamma(2a)}{\varGamma(a)\,\varGamma{\left(\frac{N}{2}\right)}}
    \\
    \smash[b]{
        \times\!\int_{0}^{\infty}\limits
        \Biggl(
            W_{\xi-1,\frac{2a-N}{4}+1}(t)
            -\frac{
                W_{\xi-\frac{1}{2},\frac{2a-N+2}{4}}^{2}(t)
            }{
                W_{\xi,\frac{2a-N}{2}}(t)
            }
        \Biggr)
        t^{\frac{2a+N-2}{4}}e^{-\frac{t}{2}}\diff t.
    }\\
    \IEEEyesnumber\label{eqn:MMSE_v2} 
\end{IEEEeqnarray*}

\subsection{Expression of the TBCRB}


In the present case, \eqref{eqn:TBCRB_general}--\eqref{eqn:Qphi_x_CR} reduce to
\begin{equation}
    \mathrm{TBCRB} =
    \Expct*{\xb}{
        \frac{1}{
            F_{\xb}
        }
    },
\end{equation}
where $F_{\xb}$ denotes the posterior Fisher information, that is
\begin{IEEEeqnarray}{/l}
    F_{\xb}\eqdef
    \Expct*{\theta|\xb}{
        \Bigl(
            \frac{\partial\ln p(\theta|\xb)}{\partial\theta}
        \Bigr)^{2}
    }
    =-\Expct[\bigg]{\theta|\xb}{
        \frac{\partial^{2}\ln p(\theta\given\xb)}{\partial^{2}\theta}
    },
    \label{eqn:def_Fx}
\end{IEEEeqnarray}
and can be expressed, similarly as in \eqref{eqn:def_BFIM}, as
\begin{IEEEeqnarray*}{rCl}
    F_{\xb}&=&
    (a-1-\frac{N}{2})\,\Expct*{\theta|\xb}{\theta^{-2}} \\
    & & +\>(\xb\trsps\xb)\,\Expct*{\theta|\xb}{\theta^{-3}}
        +(a-1)\,\Expct*{\theta|\xb}{(1-\theta)^{-2}}.\qquad
    \IEEEyesnumber\label{eqn:Fx}
\end{IEEEeqnarray*}%
By the same approach as that leading to \eqref{eqn:MMSEE} and \eqref{eqn:E_theta2}, it is possible to obtain the expressions of
$\Expct{\theta|\xb}{\theta^{-2}}$,
$\Expct{\theta|\xb}{\theta^{-3}}$ and
$\Expct*{\theta|\xb}{(1-\theta)^{-2}}$.
Finally, considering again the change of variable $t=\frac{\xb\trsps\xb}{2}$, we obtain the following expression for the TBCRB:
\begin{IEEEeqnarray*}{-l}
    \mathrm{TBCRB}= \\
    \,
    \frac{\varGamma(2a)}{\varGamma(a)\,\varGamma{\left( \frac{N}{2}\right)} } 
    \int\limits_{0}^{\infty }\frac{ W_{\xi,\frac{2a-%
                N}{4}}^{2}(t)\, t^{\frac{2a+N-2}{4}} e^{-%
            \frac{t}{2}}}{
        \begin{IEEEeqnarraybox}[\renewcommand{\IEEEeqnarraymathstyle}{\textstyle}][c]{l}
        \Bigl(
        \left( a-1-\frac{N}{2}\right) W_{\xi+1,\frac{2a-N}{4}-1}(t) \\
        \quad{+}\>2\sqrt{t}\, W_{\xi+\frac{3}{2},\frac{2a-N-6}{4}}(t) \\
        \quad{+}\> ( a-1) \frac{\varGamma(a-2) }{\varGamma(a) }\, 
        t\, W_{\xi+2,\frac{2a-N}{4}}(t)%
        \Bigr)%
        \end{IEEEeqnarraybox}%
    }\diff t.
    \IEEEyesnumber \label{eqn:TBCRB}
\end{IEEEeqnarray*}

\subsection{Asymptotic results}

In this section, we aim at proving that, for the problem under study, the posterior p.d.f. $p(\theta\given\xb)$ asymptotically (as $N\rightarrow\infty$) has the form \eqref{eqn:posterior_gauss}. It consequently makes it possible to determine an asymptotically efficient estimator (in the sense of Definition \ref{def:efficiency}), which is shown to be any estimator that is asymptotically equivalent to the maximum likelihood estimator (including the MAP estimator, in particular).

\subsubsection{First asymptotic form of $p(\theta\given\xb)$}

A first asymptotic expression for the posterior p.d.f. is obtained  from \eqref{eqn:joint_pdf}, by computing its first log-derivative:
\begin{equation}
    \frac{\partial\ln p(\theta\given\xb)}{\partial\theta}
    =
    \frac{\partial\ln p(\xb,\theta)}{\partial\theta}
    = \frac{N}{2}
      \frac{\alpha\theta^{2}-\beta\theta+\gamma}{\theta^{2}(1-\theta)}
  \label{eqn:dln_posterior}
\end{equation}
where $\alpha$, $\beta$ and $\gamma$ appeared in the expression of the MAP estimator \eqref{eqn:MAPE_N<>4(a-1)}. Since we aim at analyzing the behavior of $p(\theta\given\xb)$ as $N\rightarrow+\infty$, let us consider the case where $N\neq 4(a-1)$, i.e., $\alpha\neq 0$. Then, \eqref{eqn:dln_posterior} reduces to
\begin{equation}
\frac{\partial\ln p(\theta\given\xb)}{\partial\theta}
= \frac{N}{2}
\frac{\alpha(\theta-\widehat{\theta}_{1})(\theta-\widehat{\theta}_{2})}{\theta^{2}(1-\theta)}
\label{eqn:dln_posterior_bis}
\end{equation}
where $\widehat{\theta}_{1}\eqdef\widehat{\theta}^{\mathsf{MAP}}(\xb)=(\beta-\sqrt{\beta^{2}-4\alpha\gamma})/(2\alpha)$ (as in \eqref{eqn:MAPE_N<>4(a-1)}), and 
\begin{equation} 
    \widehat{\theta}_{2}\eqdef
    \frac{\beta+\sqrt{\beta^{2}-4\alpha\gamma}}{2\alpha}.
    \label{eqn:def_theta2}
\end{equation}

Let us introduce some notations we use in the sequel. Let $z_{1},\ldots,z_{N}$ be a sequence of $N$ real random variables, and $v(\cdot),w(\cdot)$ two real-valued functions:
\begin{IEEEitemize}
    \item $v(\zb)=v(z_{1},\ldots,z_{N})\attp \ell$
    means that $v(\zb)$ tends to the value $\ell\in\Rbb$ in probability as $N$ tends to infinity, i.e., for any $\delta>0$, $\lim_{N\rightarrow+\infty}\Pr\bigl(\abs{v(\zb)-\ell}<\delta\bigr)=1$;
    
    \item $v(\zb)\aequivp w(\zb)$ means that, when $N\rightarrow +\infty$, we can write $v(\zb)=w(\zb)(1+\varepsilon(\zb))$, where $\varepsilon(\cdot)$ is a function such that $\varepsilon(\zb)\attp 0$;
    
    \item The same notations with $v(\zb)\ttms[\scriptscriptstyle] \ell$ and $v(\zb)\equivms[\scriptscriptstyle] w(\zb)$ are used where convergence in mean-square is considered instead of convergence in probability.
\end{IEEEitemize}

Before stating the main result of this section, we need two intermediate ones, which are stated in Propositions \ref{prp:ML} and \ref{prp:MAP}.

\begin{proposition}\label{prp:ML}
    The maximum likelihood estimator $\widehat{\theta}^{\mathsf{ML}}$ of $\theta$, given by $\widehat{\theta}^{\mathsf{ML}}(\xb)=(\xb\trsps\xb)/N$, asymptotically tends to $\theta$ in probability:
    \begin{equation}
        \widehat{\theta}^{\mathsf{ML}}(\xb)=\frac{\xb\trsps\xb}{N}
        \attp \theta.
        \label{eqn:thetaML_attp_theta}
    \end{equation}
\end{proposition}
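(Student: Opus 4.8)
The plan is to recognize $\widehat{\theta}^{\mathsf{ML}}(\xb)=(\xb\trsps\xb)/N$ as the empirical mean of i.i.d.\ squared samples and to invoke the weak law of large numbers. The only genuine subtlety is that the ``limit'' $\theta$ is itself random, which I would handle by first working conditionally on $\theta$ and then exploiting the boundedness of the prior support.

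First, conditionally on $\theta$, the entries $x_{1},\ldots,x_{N}$ are i.i.d.\ $\mathcal{N}(0,\theta)$, so the variables $x_{i}^{2}$ are i.i.d.\ with conditional mean $\Expct{\xb|\theta}{x_{i}^{2}}=\theta$ and, using the fourth Gaussian moment $\Expct{\xb|\theta}{x_{i}^{4}}=3\theta^{2}$, conditional variance equal to $2\theta^{2}$. Since $\widehat{\theta}^{\mathsf{ML}}(\xb)=\frac{1}{N}\sum_{i=1}^{N}x_{i}^{2}$, this immediately gives $\Expct{\xb|\theta}{\widehat{\theta}^{\mathsf{ML}}(\xb)}=\theta$ and $\Var{\xb|\theta}{\widehat{\theta}^{\mathsf{ML}}(\xb)}=2\theta^{2}/N$.

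Next, I would apply Chebyshev's inequality conditionally on $\theta$: for any $\delta>0$,
\begin{equation*}
\Prob*{\abs{\widehat{\theta}^{\mathsf{ML}}(\xb)-\theta}>\delta\given\theta}
\leq\frac{2\theta^{2}}{N\delta^{2}}.
\end{equation*}
The decisive point is that the beta prior \eqref{eqn:prior} is supported on $\intervaloc{0}{1}$, so $\theta^{2}\leq 1$ almost surely and the right-hand side is at most $2/(N\delta^{2})$ \emph{uniformly} in $\theta$. Averaging this conditional bound over the prior then yields
\begin{equation*}
\Prob*{\abs{\widehat{\theta}^{\mathsf{ML}}(\xb)-\theta}>\delta}
=\Expct[\big]{\theta}{\Prob*{\abs{\widehat{\theta}^{\mathsf{ML}}(\xb)-\theta}>\delta\given\theta}}
\leq\frac{2}{N\delta^{2}},
\end{equation*}
which tends to $0$ as $N\rightarrow+\infty$, establishing \eqref{eqn:thetaML_attp_theta}.

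This argument is essentially routine, and I do not expect a serious obstacle. The one place that needs care is the interpretation of ``$\attp$'' when the target is the random parameter $\theta$ rather than a fixed real number; the compactness of the prior support $\intervaloc{0}{1}$ is exactly what makes the conditional Chebyshev bound uniform, so that passing from the conditional statement (valid for every $\theta\in\intervaloc{0}{1}$) to the unconditional one requires neither dominated convergence nor any uniform-integrability machinery.
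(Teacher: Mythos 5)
Your proof is correct, and its computational core --- the conditional variance $\Var{\xb|\theta}{\xb\trsps\xb/N}=2\theta^{2}/N$, obtained from the Gaussian fourth moment --- is exactly the computation in the paper's proof (the paper derives the same value by noting that $x_{n}^{2}/\theta$ is $\chi^{2}_{1}$-distributed). Where you diverge is in how this is converted into the limit statement: the paper proves the \emph{stronger} property of mean-square convergence, $\Expct{\xb,\theta}{(\xb\trsps\xb/N-\theta)^{2}}=\Expct{\theta}{2\theta^{2}/N}=(2/N)\Expct{\theta}{\theta^{2}}\rightarrow 0$, and then invokes the standard implication that $L^{2}$ convergence implies convergence in probability, whereas you apply Chebyshev's inequality conditionally on $\theta$ and then average the tail bound over the prior. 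These are the same second-moment argument with the two operations (integrating over $\theta$, applying Chebyshev/Markov) performed in the opposite order, so nothing essential changes; the paper's ordering has the small advantage of delivering $L^{2}$ convergence as a by-product, which is also the mode of convergence invoked elsewhere in the paper (notation $\equivms[\scriptscriptstyle]$). One substantive correction to your closing remark: the compactness of the prior support is \emph{not} the decisive point, and the uniform bound $\theta^{2}\leq 1$ is unnecessary. Averaging the conditional Chebyshev bound directly gives $\Prob[\big]{\abs{\widehat{\theta}^{\mathsf{ML}}(\xb)-\theta}>\delta}\leq 2\Expct{\theta}{\theta^{2}}/(N\delta^{2})$, which tends to $0$ as soon as $\Expct{\theta}{\theta^{2}}<+\infty$ --- precisely the quantity the paper's proof ends with, and finite for any beta prior. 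Since one is merely integrating a nonnegative bound, no dominated convergence or uniform-integrability argument is needed in either route, with or without bounded support.
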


\begin{proof}
    We prove the convergence in mean-square, which implies the convergence in probability. Since $\Expct{\xb|\theta}{\smash{\frac{\xb\trsps\xb}{N}}}=\theta$, we can write
    \begin{IEEEeqnarray*}{rCl}
        \Expct*{\xb,\theta}{\Bigl(\frac{\xb\trsps\xb}{N}-\theta\Bigr)^{2}}
        &=&\Expct*{\theta}{\Expct*{\xb|\theta}{\Bigl(\frac{\xb\trsps\xb}{N}-\theta\Bigr)^{2}}} \\
        &=&\Expct*{\theta}{\Var*{\xb|\theta}{\frac{\xb\trsps\xb}{N}}},
        \IEEEyesnumber\label{eqn:ML_cv_theta_ms}
    \end{IEEEeqnarray*}
    where
    \begin{equation}
        \Var*{\xb|\theta}{\frac{\xb\trsps\xb}{N}}
        = \frac{1}{N^{2}}\sum_{n=1}^{N}\Var*{\xb|\theta}{x_{n}^{2}}
        = \frac{\theta^{2}}{N}
        \Var*{\xb|\theta}{\frac{x_{n}^{2}}{\theta}}.
    \end{equation}    
    Since $x_{1},\ldots,x_{N}$ are i.i.d. such that $x_{n}\given\theta\sim\mathcal{N}(0,\theta)$, then $(x_{n}/\sqrt{\theta})\given\theta\sim\mathcal{N}(0,1)$. Thus, $x_{n}^{2}/\theta$ follows a chi-squared distribution with 1 degree of freedom, which implies $\Var*{\xb|\theta}{x_{n}^{2}/\theta}=2$. Consequently, \eqref{eqn:ML_cv_theta_ms} becomes
    \begin{equation}
        \Expct*{\xb,\theta}{\Bigl(\frac{\xb\trsps\xb}{N}-\theta\Bigr)^{2}}
        = \Expct*{\theta}{\frac{2\theta^{2}}{N}}
        = \frac{2}{N}\Expct{\theta}{\theta^{2}},
    \end{equation}
    which tends to 0 as $N\rightarrow\infty$.
\end{proof}

\begin{proposition}\label{prp:MAP}
    The MAP estimator $\widehat{\theta}^{\mathsf{MAP}}$ asymptotically behaves as the maximum likelihood (ML) estimator $\widehat{\theta}^{\mathsf{ML}}$, that is
    \begin{equation}
        \widehat{\theta}^{\mathsf{MAP}}(\xb)\aequivp
        \widehat{\theta}^{\mathsf{ML}}(\xb)=\frac{\xb\trsps\xb}{N},
        \label{eqn:thetaMAP_aequivp_thetaML}
    \end{equation}
    and
    \begin{equation}
        \widehat{\theta}_{2}\attp 1.
        \label{eqn:theta2_aequivp_1}
    \end{equation}
\end{proposition}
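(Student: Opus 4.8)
The plan is to reduce both claims to Proposition~\ref{prp:ML}, which furnishes the single convergence $\gamma=\xb\trsps\xb/N\attp\theta$, and then to propagate it through the explicit algebraic expressions for $\widehat{\theta}_{1}=\widehat{\theta}^{\mathsf{MAP}}$ and $\widehat{\theta}_{2}$ by repeated use of the continuous mapping theorem and Slutsky's lemma. The whole argument rests on identifying the limits in probability of the three coefficients $\alpha$, $\beta$, $\gamma$ defined below \eqref{eqn:MAPE_N<>4(a-1)}.

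First I would record these limits. The coefficient $\alpha=1-4(a-1)/N$ is deterministic, so $\alpha\to 1$; the coefficient $\beta=1-2(a-1)/N+\gamma$ is the sum of a deterministic sequence tending to $1$ and $\gamma\attp\theta$, whence $\beta\attp 1+\theta$; and $\gamma\attp\theta$ is exactly Proposition~\ref{prp:ML}. Consequently the discriminant satisfies $\beta^{2}-4\alpha\gamma\attp(1+\theta)^{2}-4\theta=(1-\theta)^{2}$, and because $0<\theta\leq 1$ the square root (continuous on $[0,+\infty)$, including at $0$) yields $\sqrt{\beta^{2}-4\alpha\gamma}\attp 1-\theta$.

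With these limits, a single application of the continuous mapping theorem to the two root formulas gives $\widehat{\theta}_{1}\attp\bigl((1+\theta)-(1-\theta)\bigr)/2=\theta$ and $\widehat{\theta}_{2}\attp\bigl((1+\theta)+(1-\theta)\bigr)/2=1$, which is precisely \eqref{eqn:theta2_aequivp_1}. For the asymptotic equivalence \eqref{eqn:thetaMAP_aequivp_thetaML}, I would instead form the ratio
\begin{equation*}
\frac{\widehat{\theta}_{1}}{\gamma}=\frac{\beta-\sqrt{\beta^{2}-4\alpha\gamma}}{2\alpha\gamma}\attp\frac{(1+\theta)-(1-\theta)}{2\theta}=1,
\end{equation*}
and then write $\widehat{\theta}^{\mathsf{MAP}}=\widehat{\theta}^{\mathsf{ML}}\bigl(1+\varepsilon(\xb)\bigr)$ with $\varepsilon(\xb)\eqdef\widehat{\theta}_{1}/\gamma-1\attp 0$, which is the very definition of $\widehat{\theta}^{\mathsf{MAP}}\aequivp\widehat{\theta}^{\mathsf{ML}}$.

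The delicate point throughout is the legitimacy of each continuous-mapping step, i.e.\ continuity of the relevant map at the (random) limit. Division by $\alpha$ is harmless since $\alpha\to 1$ deterministically, but the ratio above divides by $\gamma$, so I must invoke that $\theta>0$ almost surely under the beta prior \eqref{eqn:prior} (which assigns zero mass to the endpoints), making $1/(2\alpha\gamma)\attp 1/(2\theta)$ finite. The boundary value $\theta=1$, at which the discriminant limit $(1-\theta)^{2}$ vanishes, is likewise covered because $\sqrt{\cdot}$ is continuous at $0$ and $\{\theta=1\}$ has probability zero. Finally, one should note that $\beta^{2}-4\alpha\gamma\geq 0$ wherever \eqref{eqn:MAPE_N<>4(a-1)} applies---the MAP being a genuine real maximizer---so the square root is well defined along the sequence; this is the only spot where the explicit structure of the problem, rather than generic probabilistic machinery, is needed.
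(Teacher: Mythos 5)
Your proof is correct, and it shares the paper's skeleton---everything is reduced to Proposition~\ref{prp:ML} and the closed-form roots \eqref{eqn:MAPE_N<>4(a-1)}, \eqref{eqn:def_theta2} through the asymptotics of $\alpha$, $\beta$, $\gamma$ and the discriminant---but the machinery is genuinely different. The paper never passes to the limiting random variable $\theta$: it propagates multiplicative equivalences in the observable quantity $\xb\trsps\xb/N$, namely $\beta\aequivp 1+\xb\trsps\xb/N$ and $\beta^{2}-4\alpha\gamma\aequivp\bigl(1-\xb\trsps\xb/N\bigr)^{2}$, substitutes these into the root formulas, and resolves the resulting absolute value $\abs{1-\xb\trsps\xb/N}$ by showing $\Pr\bigl(1-\xb\trsps\xb/N>0\bigr)\rightarrow 1$. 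You instead prove convergence in probability to the random parameter itself ($\beta\attp 1+\theta$, discriminant $\attp(1-\theta)^{2}$, hence $\widehat{\theta}_{1}\attp\theta$ and $\widehat{\theta}_{2}\attp 1$ by continuous mapping), and then recover the multiplicative statement \eqref{eqn:thetaMAP_aequivp_thetaML} via the ratio $\widehat{\theta}_{1}/\gamma\attp 1$. What your route buys is rigour at the edge cases, which the paper leaves implicit: you invoke $\theta>0$ a.s.\ to justify dividing by $\gamma$, continuity of $\sqrt{\cdot}$ at $0$ to cover $\{\theta=1\}$, and well-definedness of the discriminant. These are not pedantic points---the paper's own chaining of equivalences through the difference $\beta-\sqrt{\beta^{2}-4\alpha\gamma}$ is legitimate only because that difference concentrates near $2\theta$, bounded away from zero in probability, which is exactly the $\theta>0$ a.s.\ condition your ratio step surfaces, and its sign argument likewise relies on $\Pr(\theta=1)=0$. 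What the paper's route buys is that it never divides by a random quantity and keeps every intermediate statement as an equivalence between observable functions of $\xb$, with the sign argument playing the role of your ratio; your consistency statement $\widehat{\theta}_{1}\attp\theta$ is a by-product the paper never states. Both arguments are sound; yours is the more carefully quantified of the two.
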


\begin{proof}
    In \eqref{eqn:MAPE_N<>4(a-1)}, we have
    \begin{IEEEeqnarray}{rCl}
        \IEEEyesnumber\IEEEyessubnumber
        \alpha &=& 1-\frac{4(a-1)}{N} \underset{N\rightarrow\infty}{\longrightarrow}1,
        \label{eqn:equiv_alpha}
    \end{IEEEeqnarray}
    \begin{IEEEeqnarray}{rCl}
        \beta = 1-\frac{2(a-1)}{N}+\frac{\xb\trsps\xb}{N}
        &=&1+\frac{\xb\trsps\xb}{N}\Bigl(1-\frac{2(a-1)}{\xb\trsps\xb}\Bigr)
        \IEEEnonumber\\
        &\underset{\mathclap{N\rightarrow\infty}}{\equivp}&\mspace{12mu}
        1+\frac{\xb\trsps\xb}{N}
        \IEEEyessubnumber\label{eqn:equiv_beta}
    \end{IEEEeqnarray}
    since we can deduce from Proposition \ref{prp:ML} that $(\xb\trsps\xb)^{-1}\attp 0$, and
    \begin{IEEEeqnarray}{rCl}
        \gamma &=& \frac{\xb\trsps\xb}{N}. \IEEEyessubnumber
    \end{IEEEeqnarray}
    Therefore, we can show that
    \begin{equation}
        \beta^{2}-4\alpha\gamma\aequivp
        \Bigl(1-\frac{\xb\trsps\xb}{N}\Bigr)^{2},
        \label{eqn:equiv_Delta}
    \end{equation}
    which leads, after plugging \eqref{eqn:equiv_alpha}, \eqref{eqn:equiv_beta} and \eqref{eqn:equiv_Delta} into \eqref{eqn:MAPE_N<>4(a-1)}, to
    \begin{equation}
        \widehat{\theta}^{\mathsf{MAP}}(\xb)\aequivp
        \frac{1+\frac{\xb\trsps\xb}{N}-\abs{1-\frac{\xb\trsps\xb}{N}}}{2}.
        \label{eqn:equiv_thetaMAP_v0}
    \end{equation}
    Yet, by definition of \eqref{eqn:thetaML_attp_theta}, we have
    \begin{equation}
        \forall\delta>0,\ 
        \lim_{N\rightarrow\infty}
        \Pr\Bigl(\abs[\Big]{\frac{\xb\trsps\xb}{N}-\theta}<\delta\Bigr) = 1,
    \end{equation}
    which implies, in particular, that $\lim_{N\rightarrow\infty}
    \Pr(1-\frac{\xb\trsps\xb}{N}>0) = 1$. Hence, we obtain \eqref{eqn:thetaMAP_aequivp_thetaML} from \eqref{eqn:equiv_thetaMAP_v0}. Similarly, after plugging \eqref{eqn:equiv_alpha}, \eqref{eqn:equiv_beta} and \eqref{eqn:equiv_Delta} into \eqref{eqn:def_theta2}, we obtain \eqref{eqn:theta2_aequivp_1}.
\end{proof}

The following proposition makes up the main result of this section.

\begin{proposition}\label{prp:asymptotic_form_posterior_1}
    A first asymptotic form of the posterior p.d.f. $p(\theta\given\xb)$ is given by
    \begin{equation}
        p(\theta\given\xb)\aequivp
        \nu(\xb)\exp\biggl[
            \frac{N}{2}\biggl(
                \ln\biggl(\frac{1}{\theta}\frac{\xb\trsps\xb}{N}\biggr)
                -\frac{1}{\theta}\frac{\xb\trsps\xb}{N}
            \biggr)
        \biggr],
        \label{eqn:posterior_aequivp_v1}
    \end{equation}
    where $\nu(\cdot)$ is a normalizing function of $\xb$ only, ensuring that
    $\int_{\mathcal{S}_{\Theta|\xb}}p(\theta\given\xb)\diff\theta=1$.
\end{proposition}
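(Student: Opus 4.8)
The plan is to begin from the joint density \eqref{eqn:joint_pdf} and exploit $p(\theta\given\xb)=p(\xb,\theta)/p(\xb)$. Because $p(\xb)$, together with the factors $(2\pi)^{-N/2}$ and $B(a,a)^{-1}$, depends on $\xb$ (and $N$) only and not on $\theta$, I would sweep all of them into a single normalizing function and record only the $\theta$-dependent kernel,
\[
p(\theta\given\xb)=\nu_{0}(\xb)\,\theta^{a-\frac{N}{2}-1}(1-\theta)^{a-1}\,e^{-\frac{\xb\trsps\xb}{2\theta}},
\]
where $\nu_{0}(\xb)$ is fixed by $\int_{\mathcal{S}_{\Theta|\xb}}p(\theta\given\xb)\diff\theta=1$.

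Next I would isolate the part of the kernel that grows with $N$. Splitting $\theta^{a-N/2-1}=\theta^{a-1}\theta^{-N/2}$ and writing $e^{-\xb\trsps\xb/(2\theta)}=\exp[-\tfrac{N}{2}\,\tfrac{1}{\theta}\tfrac{\xb\trsps\xb}{N}]$, the two $N$-dependent factors combine into $\exp[\tfrac{N}{2}(-\ln\theta-\tfrac{1}{\theta}\tfrac{\xb\trsps\xb}{N})]$. Adding and subtracting $\tfrac{N}{2}\ln(\xb\trsps\xb/N)$, an $\xb$-only quantity, I would complete this to the logarithmic form appearing in \eqref{eqn:posterior_aequivp_v1}, absorbing the compensating factor $(\xb\trsps\xb/N)^{-N/2}$ into the normalization. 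This yields the exact identity
\[
p(\theta\given\xb)=\nu(\xb)\,\theta^{a-1}(1-\theta)^{a-1}\exp\biggl[\frac{N}{2}\biggl(\ln\biggl(\frac{1}{\theta}\frac{\xb\trsps\xb}{N}\biggr)-\frac{1}{\theta}\frac{\xb\trsps\xb}{N}\biggr)\biggr],
\]
so that the claimed asymptotic form \eqref{eqn:posterior_aequivp_v1} amounts to showing that the residual prior factor $\theta^{a-1}(1-\theta)^{a-1}$ may be dropped.

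The main step is therefore to argue that $\theta^{a-1}(1-\theta)^{a-1}$ is asymptotically a constant over the region where the posterior carries its mass. Writing $c\eqdef\xb\trsps\xb/N$, the exponent has the form $\tfrac{N}{2}f(\theta)$ with $f(\theta)=\ln(c/\theta)-c/\theta$, which is strictly concave near its unique maximizer $\theta=c$, where $f''(c)=-1/c^{2}$. By Proposition \ref{prp:ML}, $c=\widehat{\theta}^{\mathsf{ML}}(\xb)\attp\theta$, and since the true parameter lies in $(0,1)$ under the beta prior \eqref{eqn:prior}, $c$ stays in a compact subinterval of $(0,1)$ with probability tending to one. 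Consequently the factor $\exp[\tfrac{N}{2}f(\theta)]$ concentrates the posterior in an $O(1/\sqrt{N})$ neighborhood of $c$, over which the smooth, bounded map $\theta\mapsto\theta^{a-1}(1-\theta)^{a-1}$ differs from its value $c^{a-1}(1-c)^{a-1}$ at the center only by a term that vanishes in probability. The constant $c^{a-1}(1-c)^{a-1}$ depends on $\xb$ only and is swallowed by $\nu(\xb)$, which delivers \eqref{eqn:posterior_aequivp_v1} in the $\aequivp$ sense.

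The delicate point, and the one I expect to require the most care, is precisely this last negligibility claim: the residual factor depends on $\theta$ as well as on $\xb$, so the control of the error must be uniform over the effective (shrinking) support of the posterior rather than pointwise. Making this rigorous calls for a quantitative concentration statement — for instance, bounding the posterior mass outside an $O(N^{-1/2+\delta})$ window of $c$ and Taylor-expanding $(a-1)[\ln\theta+\ln(1-\theta)]$ inside it — together with the guarantee that $c$ remains bounded away from the endpoints $0$ and $1$, which is where both the prior factor and $f$ would otherwise misbehave.
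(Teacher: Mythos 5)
Your proposal is correct, but it takes a genuinely different route from the paper's proof. The paper argues through the posterior score: starting from the factored form \eqref{eqn:dln_posterior_bis} of $\partial\ln p(\theta\given\xb)/\partial\theta$, it uses Proposition \ref{prp:MAP} to replace the two roots $\widehat{\theta}_{1},\widehat{\theta}_{2}$ by $\xb\trsps\xb/N$ and $1$ (so that the factor $(\theta-1)/(1-\theta)$ cancels to $-1$), and then integrates w.r.t. $\theta$ and exponentiates the resulting $\aequivp$ relation to obtain \eqref{eqn:posterior_aequivp_v1}. You never touch the score and never use Proposition \ref{prp:MAP}: you extract the posterior kernel exactly from \eqref{eqn:joint_pdf}, observe that \eqref{eqn:posterior_aequivp_v1} holds as an exact identity up to the residual prior factor $\theta^{a-1}(1-\theta)^{a-1}$, and dispose of that factor by a Laplace-type concentration argument around $c=\xb\trsps\xb/N$, relying only on Proposition \ref{prp:ML}. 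Each approach buys something. The paper's is shorter and recycles Proposition \ref{prp:MAP}, but it is only formal: an $\aequivp$ relation between derivatives does not automatically integrate to one between antiderivatives, and exponentiating $\ln p(\theta\given\xb)\aequivp H(\theta)+C(\xb)$, where the right-hand side grows like $N$, would require a relative error of order $o(1/N)$, which is not established. Your route confines all the asymptotics to a single, clearly identified step applied to an exact identity, and is therefore easier to make rigorous. Two refinements to your sketch: (i) the residual factor is bounded only for $a\geq 1$ (for $a<1$ its endpoint blow-up is in any case crushed by the exponential term); (ii) for the $\aequivp$ statement in the paper's sense --- the ratio evaluated at the random pair $(\xb,\theta)$ tends to $1$ in probability under the joint law --- you do not actually need uniform control over the shrinking effective support: continuity of $(\theta,c)\mapsto\theta^{a-1}(1-\theta)^{a-1}/\bigl[c^{a-1}(1-c)^{a-1}\bigr]$ together with $c\attp\theta$ (Proposition \ref{prp:ML}) and $\theta\in\intervaloo{0}{1}$ a.s. already suffices; the uniform Laplace estimates you flag are needed only if one insists on identifying $\nu(\xb)$ with the normalizing constant of the limiting kernel.
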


\begin{proof}
    Using \eqref{eqn:thetaMAP_aequivp_thetaML} and \eqref{eqn:theta2_aequivp_1}, we have, from \eqref{eqn:dln_posterior_bis},
    \begin{equation}
        \frac{\partial\ln p(\theta\given\xb)}{\partial\theta}\aequivp
        \frac{N}{2}\frac{
            (\theta-\frac{\xb\trsps\xb}{N})(\theta-1)
        }{\theta^{2}(1-\theta)}
        = \frac{N(\frac{\xb\trsps\xb}{N}-\theta)}{2\theta^{2}}
        \label{eqn:dln_posterior_aequivp}
    \end{equation}
    After integrating both sides w.r.t. $\theta$, we obtain
    \begin{equation}
        \ln p(\theta\given\xb) \aequivp
        -\frac{\xb\trsps\xb}{2\theta}-\frac{N}{2}\ln\theta+C(\xb), 
        \label{eqn:ln_posterior_aequivp}
    \end{equation}
    where $C(\xb)$ denotes an arbitrary function of $\xb$. By setting, $\nu(\xb)=(\frac{\xb\trsps\xb}{N})^{-N/2}\exp C(\xb)$, \eqref{eqn:ln_posterior_aequivp} leads to \eqref{eqn:posterior_aequivp_v1}.
\end{proof}

As it can be noticed, the asymptotic form obtained for $p(\theta\given\xb)$ in \eqref{eqn:posterior_aequivp_v1} is not the same as in \eqref{eqn:posterior_gauss}. However, as shown in the next section, a second asymptotic form of $p(\theta\given\xb)$ that has the required form \eqref{eqn:posterior_gauss} can be obtained, starting from \eqref{eqn:posterior_aequivp_v1}.

\subsubsection{Second asymptotic form of $p(\theta\given\xb)$}

The main result of this section is stated as follows.

\begin{proposition} \label{prp:asymptotic_form_posterior_2}
    A second asymptotic form of the posterior p.d.f. $p(\theta\given\xb)$ is given by
    \begin{equation}
        p(\theta\given\xb)\aequivp
        \xi(\xb)\exp\biggl[
            -\frac{1}{2\frac{2}{N}\bigl(\frac{\xb\trsps\xb}{N}\bigr)^{2}}
            \Bigl(\theta-\frac{\xb\trsps\xb}{N}\Bigr)^{2}
        \biggr],
        \label{eqn:posterior_aequivp_v2}
    \end{equation}
    where $\xi(\cdot)$ is a normalizing function of $\xb$ only, ensuring that
    $\int_{\mathcal{S}_{\Theta|\xb}}p(\theta\given\xb)\diff\theta=1$.
\end{proposition}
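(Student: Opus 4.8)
The plan is to treat \eqref{eqn:posterior_aequivp_v1} by a Laplace-type (second-order Taylor) expansion of its exponent about the mode. Abbreviating $t\eqdef\xb\trsps\xb/N=\widehat{\theta}^{\mathsf{ML}}(\xb)$, I first rewrite the exponent as $\frac{N}{2}h(\theta)$ with $h(\theta)\eqdef\ln(t/\theta)-t/\theta$. A direct computation gives $h'(\theta)=(t-\theta)/\theta^{2}$, so that $h'(\theta)=0$ exactly at $\theta=t$: the asymptotic posterior is maximized at the ML estimate, in agreement with \eqref{eqn:thetaMAP_aequivp_thetaML}.

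Next I evaluate the curvature at the mode. Differentiating again yields $h''(\theta)=(\theta-2t)/\theta^{3}$, hence $h''(t)=-1/t^{2}$, while $h(t)=-1$. The second-order Taylor expansion about $\theta=t$ is therefore
\[
h(\theta)=-1-\frac{1}{2t^{2}}(\theta-t)^{2}+o\bigl((\theta-t)^{2}\bigr),
\]
so that
\[
\frac{N}{2}h(\theta)=-\frac{N}{2}-\frac{N}{4t^{2}}(\theta-t)^{2}+\cdots
=-\frac{N}{2}-\frac{1}{2\,\frac{2}{N}t^{2}}(\theta-t)^{2}+\cdots.
\]
Substituting this into \eqref{eqn:posterior_aequivp_v1}, absorbing the $\theta$-independent factor $\nu(\xb)e^{-N/2}$ into a new normalizing function $\xi(\xb)$ fixed by $\int_{\mathcal{S}_{\Theta|\xb}}p(\theta\given\xb)\diff\theta=1$, and recalling $t=\xb\trsps\xb/N$, yields precisely \eqref{eqn:posterior_aequivp_v2}.

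The delicate point, which I expect to be the main obstacle, is justifying that discarding the cubic and higher terms is legitimate in the $\aequivp$ sense. Because the variance in \eqref{eqn:posterior_aequivp_v2} is $O(1/N)$, the posterior mass concentrates on an interval of width $O(N^{-1/2})$ about $t$; on that interval the cubic contribution is $\frac{N}{2}\,O\bigl((\theta-t)^{3}\bigr)=O(N^{-1/2})\to0$. This control requires $t$ to remain bounded away from $0$ so that $h''(\theta),h'''(\theta),\dots$ stay bounded, which holds in probability since $t=\widehat{\theta}^{\mathsf{ML}}(\xb)\attp\theta$ by Proposition \ref{prp:ML} and the beta prior charges only $\theta>0$. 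Making the remainder estimate and the control of $1/t$ precise enough to upgrade the pointwise expansion to the stated equivalence is the crux of the argument.
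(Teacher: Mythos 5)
Your proposal is correct and follows essentially the same route as the paper: a second-order Taylor (Laplace) expansion of the exponent $\frac{N}{2}h_{\xb}(\theta)$ about the ML estimate $\xb\trsps\xb/N$, with $h_{\xb}(t)=-1$, vanishing first derivative, and curvature $-1/t^{2}$, followed by absorbing $\nu(\xb)e^{-N/2}$ into the normalizer $\xi(\xb)$. The globalization step you flag as the crux is handled in the paper by the same concentration argument you sketch — the posterior mass lies in an interval of width $O(\sqrt{2/N}\,\xb\trsps\xb/N)$ around the mode (their $\mathcal{I}_{N,D}(\xb)$), which is eventually contained in the neighborhood where the quadratic approximation is valid — and at essentially the same level of rigor as your remainder estimate.
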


\begin{proof}
    Let us define
    \begin{equation}
        h_{\xb}(\theta) \eqdef
        \ln\biggl(\frac{1}{\theta}\frac{\xb\trsps\xb}{N}\biggr)
        -\frac{1}{\theta}\frac{\xb\trsps\xb}{N}.
    \end{equation}
    A Taylor expansion of $h_{\xb}(\cdot)$ at the vicinity of
    $\widehat{\theta}^{\mathsf{ML}}(\xb)=\frac{\xb\trsps\xb}{N}$ gives, after setting $d\theta=\theta-\frac{\xb\trsps\xb}{N}$,
    \begin{IEEEeqnarray*}{l}
        h_{\xb}\biggl(\frac{\xb\trsps\xb}{N}+d\theta\biggr)\\
        \quad=
        h_{\xb}\biggl(\frac{\xb\trsps\xb}{N}\biggr)
        +d\theta\frac{
                    \diff h_{\xb}(\theta)
                }{
                    \diff\theta
                }\bigg|_{\frac{\xb\trsps\xb}{N}} 
        \mspace{-15mu}+\frac{d\theta^{2}}{2}\frac{
            \diff^{2} h_{\xb}(\theta)
        }{
            \diff\theta^{2}
        }\bigg|_{\frac{\xb\trsps\xb}{N}} \mspace{-15mu}+ o(d\theta^{2}) \\
    \quad=-1 + d\theta\times 0 -\frac{d\theta^{2}}{2\bigl(\frac{\xb\trsps\xb}{N}\bigr)^{2}}+ o(d\theta^{2}) \\
    \quad= -1-\frac{(\theta-\frac{\xb\trsps\xb}{N})^{2}}{2\bigl(\frac{\xb\trsps\xb}{N}\bigr)^{2}}+ o\Bigl(\Bigl(\theta-\frac{\xb\trsps\xb}{N}\Bigr)^{2}\Bigr). \IEEEyesnumber \label{eqn:taylor_exp_hx}
    \end{IEEEeqnarray*}
    This yields, for $\theta\in\intervaloo{\frac{\xb\trsps\xb}{N}-d\theta}{\frac{\xb\trsps\xb}{N}+d\theta}$,
    \begin{equation}
        p(\theta\given\xb)
        \underset{N\rightarrow\infty,d\theta\rightarrow 0}{\overset{P}{\sim}}
        \xi(\xb)\exp\biggl[
            -\frac{1}{2\frac{2}{N}\bigl(\frac{\xb\trsps\xb}{N}\bigr)^{2}}
            \Bigl(\theta-\frac{\xb\trsps\xb}{N}\Bigr)^{2}
        \biggr],
        \label{eqn:local_posterior_aequivp}
    \end{equation}
    where $\xi(\xb)=\nu(\xb)\exp[-N/2]$. It is worth noting that \eqref{eqn:local_posterior_aequivp} is only a local approximation of the posterior p.d.f. (at the vicinity of $\frac{\xb\trsps\xb}{N}$). However, we can deduce from Proposition \ref{prp:ML} that
    $\frac{2}{N}\bigl(\frac{\xb\trsps\xb}{N}\bigr)^{2}\attp 0$, i.e., for any $D>0$,
    \begin{equation}
        \lim_{N\rightarrow\infty} \Pr\biggl(
            \frac{2}{N}\Bigl(\frac{\xb\trsps\xb}{N}\Bigr)^{2}
            <\frac{d\theta^{2}}{D^{2}}
        \biggr)
        =1,
    \end{equation}
    which is equivalent to
    \begin{equation}
        \lim_{N\rightarrow\infty} \Pr\Bigl(
            \mathcal{I}_{N,D}(\xb)\subset\intervalcc[\Big]{\frac{\xb\trsps\xb}{N}-d\theta}{\frac{\xb\trsps\xb}{N}+d\theta}
        \Bigr)=1,
        \label{eqn:intervals_subsets_P1}
    \end{equation}
    where 
    $\mathcal{I}_{N,D}(\xb)\eqdef \intervalcc{
        \frac{\xb\trsps\xb}{N}-D\sqrt{\frac{2}{N}}\frac{\xb\trsps\xb}{N}
    }{
        \frac{\xb\trsps\xb}{N}+D\sqrt{\frac{2}{N}}\frac{\xb\trsps\xb}{N}
    }\cap\intervalcc{0}{1}$. In addition, \eqref{eqn:local_posterior_aequivp} implies that, for sufficiently large $D$,
    \begin{equation}
        \Pr(\theta\notin\mathcal{I}_{N,D}(\xb)\given\xb)=0.
        \label{eqn:pr_theta_notin_I_ND}
    \end{equation}
    Thus, \eqref{eqn:intervals_subsets_P1} and \eqref{eqn:pr_theta_notin_I_ND} imply that \eqref{eqn:local_posterior_aequivp} becomes \eqref{eqn:posterior_aequivp_v2}.
\end{proof}

\subsubsection{Asymptotic efficiency}

Finally, Proposition \ref{prp:asymptotic_form_posterior_2} shows that asymptotically, the posterior p.d.f. has the form \eqref{eqn:posterior_gauss}, with $v(\xb) = \frac{2}{N}\bigl(\frac{\xb\trsps\xb}{N}\bigr)^{2}$, and $\widehat{\theta}(\xb) = \widehat{\theta}^{\mathsf{ML}}(\xb)$. We notice that $v(\xb)$ indeed depends on $\xb$, thus we can deduce, from discussion in Section \ref{ssec:equality_BCRBS}, that the BCRB and the TBCRB are not equivalent asymptotically. In addition, due to Proposition \ref{prp:MAP}, both the ML and the MAP estimators are asymptotically efficient, and due to \eqref{eqn:LB_MSE_BCRBs}, so is the MMSE estimator. Last but not least, an asymptotic expression for the TBCRB can be obtained, that is nothing else than the ECRB (see Section \ref{ssec:exple_known_results}), as stated in the following proposition.
\begin{proposition} \label{prp:TBCRB_ECRB}
    Asymptotically, the TBCRB is equivalent to the ECRB, i.e., 
    \begin{equation}
        \mathrm{TBCRB}\aequivp\mathrm{ECRB},
        \label{eqn:TBCRB_aequivp_ECRB}
    \end{equation}
    where the ECRB is given by \eqref{eqn:ECRB_exple}.
\end{proposition}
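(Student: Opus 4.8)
The plan is to read off the posterior Fisher information from the Gaussian asymptotic form of $p(\theta\given\xb)$ and then reduce the resulting $\xb$-average to the ECRB by a routine moment computation. First, I would note that for a Gaussian density the Fisher information equals the reciprocal of the variance. Since Proposition \ref{prp:asymptotic_form_posterior_2}, through \eqref{eqn:posterior_aequivp_v2}, shows that $p(\theta\given\xb)$ is asymptotically Gaussian with variance $v(\xb)=\frac{2}{N}\bigl(\frac{\xb\trsps\xb}{N}\bigr)^{2}$, the posterior Fisher information \eqref{eqn:def_Fx} satisfies $F_{\xb}\aequivp 1/v(\xb)$, and hence $1/F_{\xb}\aequivp v(\xb)$.

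Second, I would transfer this equivalence (which holds in probability, pointwise in $\xb$) to the level of the $\xb$-average defining the bound, obtaining
\begin{equation*}
    \mathrm{TBCRB}=\Expct{\xb}{\frac{1}{F_{\xb}}}\aequivp\Expct{\xb}{v(\xb)}
    =\frac{2}{N}\Expct*{\xb}{\Bigl(\frac{\xb\trsps\xb}{N}\Bigr)^{2}}.
\end{equation*}
Then I would evaluate the remaining second moment by conditioning on $\theta$ and reusing the chi-squared computation already carried out in the proof of Proposition \ref{prp:ML}: since $x_{n}\given\theta\sim\mathcal{N}(0,\theta)$ gives $\Expct{\xb|\theta}{\frac{\xb\trsps\xb}{N}}=\theta$ and $\Var{\xb|\theta}{\frac{\xb\trsps\xb}{N}}=\frac{2\theta^{2}}{N}$, we get $\Expct{\xb|\theta}{(\frac{\xb\trsps\xb}{N})^{2}}=\theta^{2}(1+\frac{2}{N})$, so that
\begin{equation*}
    \Expct{\xb}{v(\xb)}
    =\frac{2}{N}\Expct*{\theta}{\Expct*{\xb|\theta}{\Bigl(\frac{\xb\trsps\xb}{N}\Bigr)^{2}}}
    =\frac{2}{N}\Bigl(1+\frac{2}{N}\Bigr)\Expct{\theta}{\theta^{2}}.
\end{equation*}

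Finally, since $1+\frac{2}{N}\rightarrow 1$ as $N\rightarrow\infty$, the factor $(1+2/N)$ is absorbed into the asymptotic equivalence, leaving $\mathrm{TBCRB}\aequivp\frac{2}{N}\Expct{\theta}{\theta^{2}}$, which is exactly the ECRB by \eqref{eqn:ECRB_exple}; this establishes \eqref{eqn:TBCRB_aequivp_ECRB}.

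I expect the main obstacle to be the second step, namely justifying the interchange of the asymptotic equivalence $\aequivp$ with the expectation $\Expct{\xb}{\cdot}$: the relation $1/F_{\xb}\aequivp v(\xb)$ is a statement in probability, whereas passing to $\Expct{\xb}{1/F_{\xb}}\aequivp\Expct{\xb}{v(\xb)}$ requires a uniform-integrability (or domination) argument that controls both the ratio $F_{\xb}^{-1}/v(\xb)$ and the tails of $\frac{\xb\trsps\xb}{N}$ under $p(\xb)$, and that accounts for the fact that \eqref{eqn:posterior_aequivp_v2} arises from a \emph{local} Taylor expansion so its effect on the second moment must be shown to be asymptotically negligible. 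The closed-form moment computation itself is routine.
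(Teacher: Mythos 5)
Your proposal follows essentially the same route as the paper's proof: read off $1/F_{\xb}\aequivp v(\xb)$ from the asymptotic Gaussian posterior of Proposition \ref{prp:asymptotic_form_posterior_2}, pass to $\mathrm{TBCRB}\aequivp\Expct{\xb}{v(\xb)}$, and evaluate the conditional chi-squared second moment to obtain $\frac{2}{N}\bigl(1+\frac{2}{N}\bigr)\Expct{\theta}{\theta^{2}}\aequivp\mathrm{ECRB}$; the only cosmetic difference is that you compute $\Expct{\xb|\theta}{(\xb\trsps\xb/N)^{2}}$ from the mean and variance established in Proposition \ref{prp:ML}, whereas the paper differentiates the $\chi^{2}_{N}$ moment-generating function. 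The uniform-integrability concern you flag when passing from the pointwise equivalence $1/F_{\xb}\aequivp v(\xb)$ to the averaged statement is legitimate, but the paper's proof glosses over it in exactly the same way, asserting $\mathrm{TBCRB}\aequivp\Expct{\xb}{v(\xb)}$ without further justification.
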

\begin{proof}
    Following from \eqref{eqn:TBCRB_general}--\eqref{eqn:Qphi_x_CR}, \eqref{eqn:expr_v}, and $g(\theta) = \theta$, we have
    \begin{equation}
        \mathrm{TBCRB} \aequivp \Expct{\xb}{v(\xb)},\label{eqn:TBCRB_aequivp_v1}
    \end{equation}
    where $v(\xb) = \frac{2}{N}\bigl(\frac{\xb\trsps\xb}{N}\bigr)^{2}$. Since $x_{1},\ldots,x_{N}$ are i.i.d. such that $x_{n}\given\theta\sim\mathcal{N}(0,\theta)$, then $(x_{n}/\sqrt{\theta})\given\theta\sim\mathcal{N}(0,1)$. Thus, for a given $\theta$, the random variable $t\given\theta\eqdef(\xb\trsps\xb)/\theta$ follows a chi-squared distribution with $N$ degrees of freedom: $t\given\theta\sim\chi^{2}_{N}$. Consequently,
    \begin{IEEEeqnarray*}{rCl}
        \Expct{\xb}{(\xb\trsps\xb)^{2}}
        = \Expct{\xb,\theta}{(\xb\trsps\xb)^{2}}
        &=& \Expct[\bigg]{\theta}{\theta^{2}\Expct[\bigg]{\xb\given\theta}{
                \biggl(\frac{\xb\trsps\xb}{\theta}\biggr)^{2}
            }} \\
        &=& \Expct*{\theta}{\theta^{2}\Expct*{t\given\theta}{t^{2}}}.
        \IEEEyesnumber
    \end{IEEEeqnarray*}
    The moment-generating function of $t$ is given by $M(t)\eqdef (1-2t)^{-N/2}$, and we have
    \begin{equation}
        \Expct*{t\given\theta}{t^{2}}=\frac{\diff^{2}M(t)}{\diff t^{2}}\bigg|_{t=0} = N(N+2),
    \end{equation}
    which leads to $\Expct{\xb}{(\xb\trsps\xb)^{2}}=N(N+2)\Expct*{\theta}{\theta^{2}}$. After plugging this relation into \eqref{eqn:TBCRB_aequivp_v1}, we obtain
    \begin{equation}
        \mathrm{TBCRB}\aequivp\frac{2}{N}\Bigl(1+\frac{2}{N}\Bigr)\Expct{\theta}{\theta^{2}},
    \end{equation}
    and noticing that $\frac{2}{N}\bigl(1+\frac{2}{N}\bigr)\Expct{\theta}{\theta^{2}}\aequivp\frac{2}{N}\Expct{\theta}{\theta^{2}}$ yields \eqref{eqn:TBCRB_aequivp_ECRB}.
\end{proof}

\subsection{Numerical results}\label{ssec:simulations}

In this section, we provide simulation results that illustrate the theoretical ones from the previous sections. these results appeared in \cite{BFOC19}, and are given here for sake of completeness.

Figure \ref{fig1} shows root MSEs (RMSEs) on the estimation of $\theta=\sigma^{2}$ of the MAP and MMSE estimators, as well as the BCRB, the TBCRB and the ECRB as functions of $N$, for $N=2^{n},n\in\{1,\ldots,13\}$, and the shape parameter $a=3$ in the prior density \eqref{eqn:prior}. The RMSEs of the MAP and the MMSE estimators are computed from \eqref{eqn:MAPE_N<>4(a-1)}--\eqref{eqn:MAPE_N=4(a-1)} and \eqref{eqn:MMSEE} respectively, and averaged through 20,000 Monte-Carlo trials using Matlab, which is not able to compute the Whittaker functions $W_{\mu,\nu}(z)$ appearing in \eqref{eqn:MMSEE} for $N\geq 110$. As for the theoretical square-root of the MMSE, it was computed from \eqref{eqn:MMSE_v2} using Mathematica, which is able to compute it for larger values of $N$. Consequently, the comparison between the empirical and theoretical MMSEs is only available for $N\leq 64$, where we notice a perfect match. Figure \ref{fig1} illustrates a number of points discussed in the previous sections of this paper. In particular, Proposition \ref{prp:asymptotic_form_posterior_2} and its consequences, i.e., both the MAP and the MMSE estimators are asymptotically efficient, according to Definition \ref{def:efficiency}, that is relatively to the TBCRB rather than the BCRB. For large values of $N$, we notice that i) a clear gap separates the TBCRB from the BCRB, and ii) the TBCRB and the ECRB indeed tend to the same value, which illustrates Proposition \ref{prp:TBCRB_ECRB}, and validates the asymptotic expression of $p(\theta\given\xb)$ \eqref{eqn:posterior_aequivp_v2}. Finally, as $N$ decreases, both the MAP and the MMSE estimators' RMSEs tend to the prior standard deviation $\sigma_{\pi}=1/(2\sqrt{7})$ on the one hand, while the BCRB and the TBCRB tend to the same value, since the prior information dominates, and the posterior variance does practically not depend on the observations $\xb$ anymore. 

\begin{figure}[t!]
    \centering
    \includegraphics[width=\linewidth]{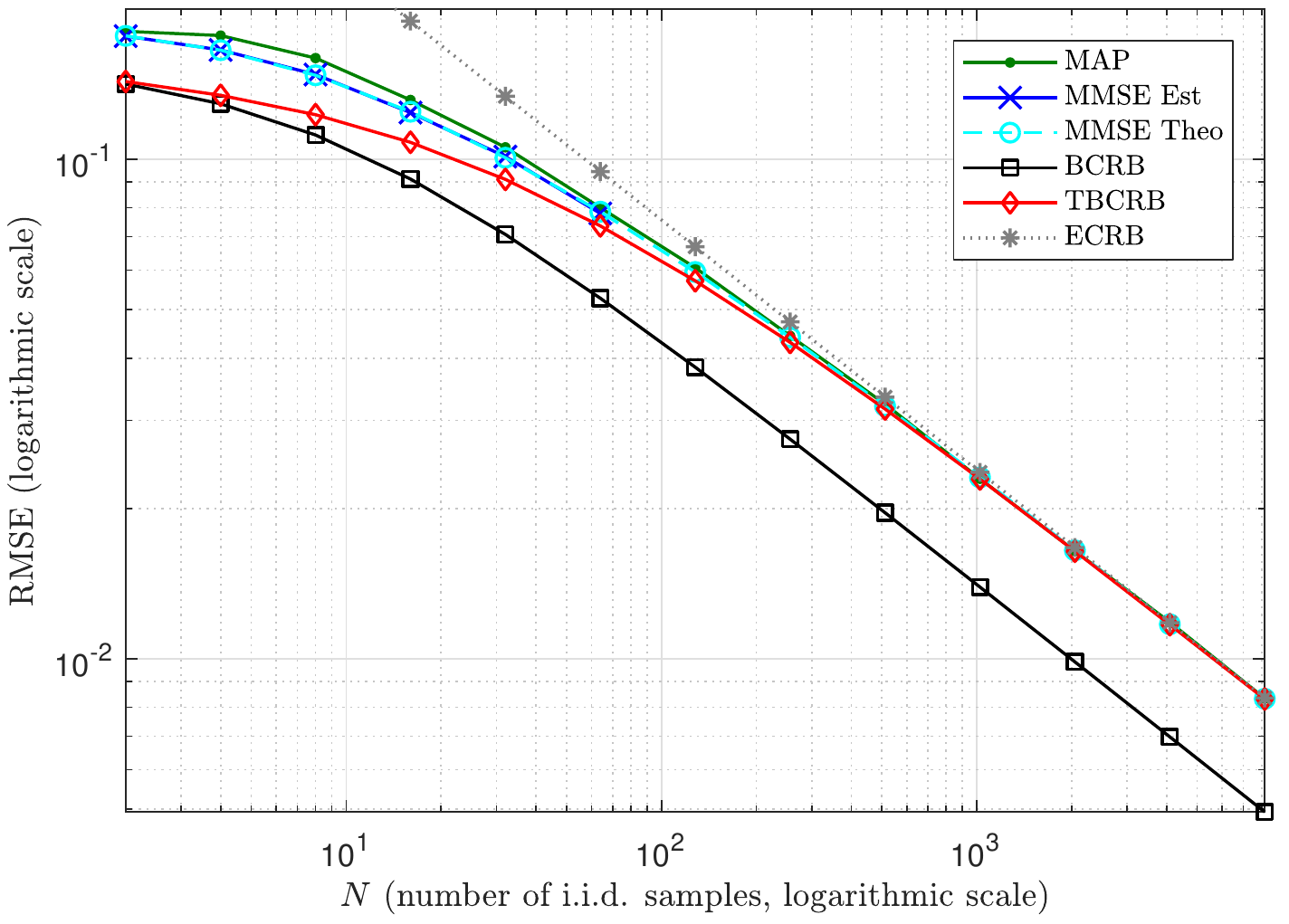}
    \caption{
        RMSE of the MAP (green dots) and the MMSE (estimated via Monte-Carlo: blue ``$\times$'' signs; theoretical from \eqref{eqn:MMSE_v2}: cyan circles, dashed line) estimators, with BCRB (black squares), TBCRB (red diamonds) and ECRB (gray ``$*$'' signs) versus $N$, in estimating $\theta=\sigma^{2}$.
    }
    \label{fig1}
\end{figure}

\section{Concluding remarks} \label{sec:conclusion}

Any Bayesian lower bound on the MSE in the Weiss-Weinstein family has an alternative form, which turns out to be at least as tight as the standard form. We have given a proof for the case of the estimation of a vector parameter $\thetab$, or any known vector function $\gb(\thetab)$ of a vector parameter. 
The tighter forms of BLBs give rise to an update in the definition of efficient estimation in the Bayesian framework, and make it possible to derive new conditions for efficiency. A sufficient condition for the standard and the tighter forms to differ is that the posterior autocorrelation matrix $\Qb_{\phib\given\xb}$ of the BLB-generating functions $\phib(\xb,\thetab)$, as well as the posterior intercorrelation matrix $\Rb_{\gb\phib\given\xb}$, do depend upon $\xb$, the given set of observations. This condition is likely to be met, except when few observations are available. In the case of the estimation of a scalar quantity $g(\theta)$, we have shown that an efficient estimator can be found if and only if the posterior distribution $p(\theta\given\xb)$ has a particular form in the exponential family of distributions. Conversely, we have derived conditions for finding efficient estimators in the case of exponential family models with conjugate prior distribution. Finally, we illustrated the precision gain and relevance of the tighter forms of BLBs through a noteworthy example, for which the standard form of the BCRB is known to not be tight. We have shown that, in the asymptotic regime, Bayesian estimators like the MAP or the MMSE attain the TBCRB, and then are efficient. These results were validated by numerical simulations.

\bibliographystyle{IEEEtran}
\bibliography{mybib}

\end{document}